\newtheorem{theorem}{Theorem}
\newtheorem{proposition}[theorem]{Proposition}
\newtheorem{lemma}[theorem]{Lemma}
\newcommand{\ket}[1]{\mbox{$| #1 \rangle$}}
\DeclareMathOperator\arctanh{arctanh}
\definecolor{myred}{rgb}{1,0,0}
\definecolor{myblue}{rgb}{0,0,0.8}
\definecolor{myyellow}{rgb}{0.9,0.8,0}
\definecolor{mygreen}{rgb}{0,0.7,0}
\definecolor{myorange}{rgb}{1,1,0}
\begin{document}

\title{Runtime--coherence trade-offs for hybrid SAT-solvers}

\author{
	Vahideh Eshaghian$^{(1)}$, 
S\"oren Wilkening$^{(2), (3)}$, Johan {\AA}berg$^{(1)}$, David Gross$^{(1)}$
}
\email{vahideh@thp.uni-koeln.de}
\affiliation{
	(1) Institute for Theoretical Physics, University of Cologne, Germany; 
	(2) Institut f\"ur Theoretische Physik, Leibniz Universit\"at Hannover, Germany;
	(3) Volkswagen AG, Berliner Ring 2, 38440 Wolfsburg;
}

\date{\today}

\begin{abstract}
	Many search-based quantum algorithms that achieve a theoretical speedup are not practically relevant since they require extraordinarily long coherence times, or lack the parallelizability of their classical counterparts. 
	This raises the question of how to divide computational tasks into  a collection of parallelizable sub-problems, each of which can be solved by a quantum computer with limited coherence time. 
	Here, we approach this question via hybrid algorithms for the $k$-SAT problem. 
	Our analysis is based on Sch\"oning's algorithm, which solves instances of $k$-SAT by performing random walks in the space of potential assignments. 
	The search space of the walk allows for ``natural'' partitions, where we subject only one part of the partition to a Grover search, while the rest is sampled classically, thus resulting in a hybrid scheme. 
	In this setting, we argue that there exists a simple trade-off relation between the total runtime and the coherence-time, which no such partition based hybrid-scheme can surpass. 
	For several concrete choices of partitions, we explicitly determine the specific runtime coherence-time relations, and show saturation of the ideal trade-off. 
	Finally, we present numerical simulations which suggest additional flexibility in implementing hybrid algorithms with optimal trade-off.  
\end{abstract}

\maketitle

\section{Introduction}

Consider a quantum algorithm that takes exponential time to run, but still offers a polynomial speedup over the best classical method.
Examples include Grover searches to brute-force a password or for finding the solution  for a combinatorial optimization problem for which no classical heuristics exist.
Fully quantum implementations might not be desirable for two reasons:
(1) Quantum hardware that can sustain very long computations might not be available,
and
(2) quantum algorithms, like Grover's search, might not be easily amenable to parallelization.
One is thus lead to the question of how to best break up such instances into a set of smaller, parallelizable subproblems that can individually be solved on quantum hardware.

We consider the well-known \emph{satisfiability problem} with $k$ the number of literals in each clause, ($k$-SAT) and focus particularly on $3$-SAT since it provides an attractive test bed to investigate such questions. $k$-SAT is the archetypical combinatorial optimization problem and represents a class of use cases with considerable practical relevance. 
Moreover, there is a classical randomized algorithm \cite{Schoening99,SchoeningToranBook}
due to Sch\"oning, with a performance close to the best-known algorithms with provable performance, and which furthermore allows for a closed-form asymptotic run-time analysis. And indeed, the algorithm obtained by replacing the classical search of the Sch\"oning-procedure by a Grover search \cite{Grover96} yields a \emph{quantum-Sch\"oning algorithm} with a quadratic improvement \emph{vis-\`a-vis} its classical counterpart \cite{Ambainis04}. 
(Below, we will refer to quantum algorithms that arise this way as \emph{Groverizations} of their classical versions).

However, such `fully quantized' Sch\"oning's SAT-solvers cannot be performed in parallel, which arguably is a relevant feature for algorithms that run in exponential time. Hybrid schemes, based on `partial' Groverizations of Sch\"oning's algorithm, where Grover search procedures are applied only to certain sub-routines, usually do allow for parallelizations.

Starting point of our analysis is the stochastic nature of Sch\"oning's algorithm as a random walk. 
This point of view yields two classes of hybrid algorithms, where one class Groverizes the random choice of the initial state of the walk, while the other class Groverizes the randomness in the walk itself. 
Within an established model of Sch\"oning's algorithm, we optimize the resulting run-times by balancing the resources allocated to the subroutines.

\subsection{Runtime--Coherence Time Trade-Offs}
\label{sec:run_vs_coherence}

Before specializing the Sch\"oning process, let us briefly outline the trade-offs between runtime and coherence time that can be expected for quantum search problems. 
Consider an algorithm that solves instances of size $n$ with runtime $T(n)$.
For exponential-time algorithms, we work with a somewhat coarser measure, the \emph{(asymptotic) runtime rate} 
\begin{align*}
	\gamma=\lim_{n\to\infty} \frac1n \log T(n),
\end{align*}
where we drop the base of the logarithm from here on; the base is $2$ unless explicitly stated otherwise.
In other words, $T\in O^*(2^{\gamma n})$, where 
$O^*$ denotes scaling behavior up to polynomial factors.
The aim is to trade it off against the \emph{coherence time} required to run the algorithm.
If $C(n)$ is the longest time over which coherence has to be maintained while running the algorithm, then the \emph{coherence time rate} is
\begin{align*}
	\chi=\lim_{n\to\infty} \frac1n \log C(n).
\end{align*}

Now restrict attention to search algorithms with classical runtime rate $\gamma_{\mathrm{C}}$.
A completely Groverized version runs with rate 
$\gamma_{\mathrm{G}}=\gamma_{\mathrm{C}}/2$.
All of its runtime will be spent coherently, specifically executing Grover iterations.
Therefore, 
$\chi_{\mathrm{G}}=\gamma_{\mathrm{C}}/2$
as well.
We can visualize these two points in a ``runtime rate vs coherence time rate''-chart, a mode of visualization that we will employ frequently (Fig.~\ref{fig:chart}).

To achieve a trade-off between total runtime and coherence time, we will consider algorithms that apply Grover's procedure only to a subset of the search space.
It is easy to see that any algorithm which results from such a procedure must have coordinates $(\chi,\gamma)$ that lie on or above the line segment
\begin{align*}
	L=\big\{ (\chi, \gamma_{\mathrm{C}} - \chi) \,|\, \chi\in[0,\gamma_{\mathrm{C}}/2] \big\}
\end{align*}
that connects the purely classical point $(0,\gamma_{\mathrm{C}})$ to the completely Groverized one 
$( \gamma_{\mathrm{C}}/2, \gamma_{\mathrm{C}}/2)$.

Indeed, take a partial Groverization that achieves parameters $(\chi,\gamma)$.
Then one can replace the Grover part by a classical search.
The resulting classical algorithm will have parameters $(0,\gamma+\chi)$, because the Grover search contributed $\chi$ to the runtime rate, but its classical simulation will contribute $2\chi$ instead.
 But if the initial parameters were below the line, i.e.\ if $\gamma < \gamma_{\mathrm{C}} - \chi$, then the resulting classical algorithm runtime rate is
$\gamma + \chi < \gamma_{\mathrm{C}}$, 
contradicting the assumption that $\gamma_{\mathrm{C}}$ describes the classical complexity of the search.

\begin{figure}[ht]
    \begin{center}
			\includegraphics[scale=.65]{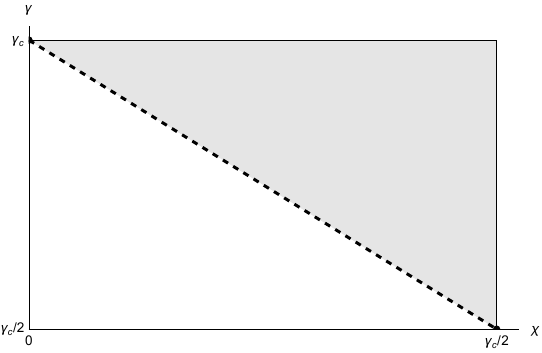}
    \end{center}
		\caption{\label{fig:chart} 
			We will frequently visualize the behavior of algorithms by indicating their position in a ``runtime rate vs coherence time rate''-chart.
			Classical algorithms require no coherence, and thus lie on the $y$-axis.
			In the example given, the point on the upper left hand side represents a classical probabilistic search with runtime rate $\gamma_{\mathrm{C}}$.
			A completely Groverized version has coordinates $( \gamma_{\mathrm{C}}/2, \gamma_{\mathrm{C}}/2)$ (bottom right), meaning that it will spend its entire runtime coherently.
			Hybrid algorithms that use Grover only for a subset of the search space must lie in the shaded area above or on the dashed line segment connecting these two points.
		}
\end{figure}

It is not obvious that, conversely, every point on this optimal line segment can actually be realized, much less with an algorithm that is ``natural'' or easy to implement.
Deciding the parameter ranges for natural partial Groverizations of Sch\"oning's procedure is the main goal of this paper.

\subsubsection{Related work}

Dunjko et al.\ \cite{Dunjko18} have previously considered partial Groverizations of Sch\"oning's algorithm.
They aimed to minimize a different metric:
total number of clean qubits, rather than coherence time.
In fact, they work in a highly constrained regime, where the number of available clean qubits only scales as $cn$, with $0< c <1$ and $n$ the number of variables of the given $3$-SAT formula. 
Surprisingly, they show that even this meager allotment of qubits in principle yields a speed-up compared to the classical Sch\"oning's algorithm \footnote{According to \cite{Dunjko18}, Supplemental Material Section~B.4, the relative speed-up to the classical Sch\"oning's rate is $f(c)= (1-\log\sqrt{3})\beta(c)$, where the Beta function up to $O(\frac{\log n }{n})$ is implicitly given as $A\beta(c)\ln{\frac{1}{\beta(c)}}+B\beta(c) = c$. As mentioned in \cite{Dunjko18} using a straightforward encoding of each trit into two qubits, one can assume $A= 10$ and $B=50$. To be consistent with our encoding, we consider $\log_23$ qubits to encode a trit and then, calculate the maximum speed-up in the rate, i.e. $f(1)\approx 0.0028$.}.

Despite the superficial similarities, their and our papers are quite different.
We allow for qubit-counts that are quasi-linear in $n$, i.e.\ $O (n\log n)$, reasoning that for exponential-time algorithms, coherence time and parallelizability might be more limiting than the number of available qubits.
As it will turn out, the setting considered here can interpolate between the classical and the fully Groverized performance, while the runtime rates obtainable in \cite{Dunjko18} stay close to the classical ones.
While \cite{Dunjko18} uses de-randomization techniques, our approach builds more directly on the original Sch\"oning's algorithm. This makes our approach technically less involved, and it also makes the lessons learned more widely applicable, since the basic technique of using Grover search over a subset of all variables, directly generalizes to any NP problem, whereas de-randomizations to a larger extent rely on the particular structure of the problem at hand.

\section{Setting the stage}
\subsection{Sch\"oning's algorithm}

Here we provide a very brief introduction to the pertinent aspects of Sch\"oning's $3$-SAT solver. 
For a more thorough review, we refer the reader to \cite{Schoening99,SchoeningToranBook}. 
In the $3$-SAT problem, we are given a collection of \emph{clauses} $C_1,\ldots, C_L$ on $n$ binary variables, where each clause is of the form $C_j = l_0^{(j)}\vee l_1^{(j)}\vee l_2^{(j)}$, and where each of the \emph{literals} $l_0^{(j)}$,  $l_1^{(j)}$, $l_2^{(j)}$ is one of the binary variables or its negation. 
The $3$-SAT formula is the conjunction of all the given clauses, $C:= \wedge_{j=1}^{L}C_j$, and the computational task is to determine whether there exists an assignment of the $n$ binary variables that satisfies $C$. 
According to Sch\"oning \cite{Schoening99}, an algorithm exists that, although with run-time that is exponential in $n$, can perform better than an exhaustive search through all potential assignments.  

Sch\"oning's algorithm (Alg.~\ref{alg:schoen}) depends on two parameters $N, m$ to be determined later.
It begins by choosing an assignment $x\in \{0,1\}^{\times n}$ uniformly at random.
The algorithm then performs an $m$-step \emph{random walk} over the space of $n$-bit strings (the inner loop in Alg.~\ref{alg:schoen}, from Line~\ref{line:walk}).
In every step, it checks (according to a pre-determined order) all the clauses $C_1,\ldots, C_L$. 
If all are satisfied, then $x$ is a solution and the algorithm terminates. 
Otherwise, it finds the first unsatisfied clause,
chooses one of the three variables corresponding to the literals of that clause uniformly at random.
The value of $x$ is then updated, by negating that variable. 
This concludes the step.
If no solution is found after $m$ steps, the walk is terminated.
Up to $N$ such walks are attempted (the outer loop in Alg.~\ref{alg:schoen}), each time using a fresh uniformly random starting point $x$.

\begin{algorithm}[H]
	\caption{Sch\"oning's Algorithm}\label{alg:schoen}
	\begin{algorithmic}[1]
		\Function{Schoening}{$C_1, \dots, C_L, N, m$}
			\For{$i=1 ...  N$}
				\State $x\leftarrow$ uniformly random value from $\{0,1\}^{\times n}$
				\For{$j=1...m$}
				\If{$x$ satisfies $C_1, \dots, C_L$}\label{line:walk}
						\State\Return $x$
					\Else
						\State $k\leftarrow$ index of first unsatisfied clause
						\State $l\leftarrow$ index of one of the three variables occurring in $C_k$, chosen uniformly at random
						\State $x\leftarrow x$, with the $l$-th bit of $x$ flipped
					\EndIf
				\EndFor
				\EndFor
			\State \Return False
	\EndFunction
	\end{algorithmic}
\end{algorithm}

\subsection{\label{SettingStageRunTime} Analysis of the run-time of Sch\"oning's algorithm}

The analysis of the run-time of Sch\"oning's algorithm is sketched in \cite{Schoening99,SchoeningToranBook} and a more in-depth analysis can be found in \cite{SwissPhDThesis}. 
Here we follow a very similar line of reasoning, with our particular ansatz in mind. 
In the following, we present an overview, see Appendix~\ref{AppFromSchToZ} for a more detailed account.

Assume that there is at least one satisfying assignment $x^\star$.
We first aim to lower-bound the probability that a given random walk finds a solution.
Let $x_0$ be the (random) initial configuration, and $x_l$ the one attained after the $l$-th step of the random walk.
The probability that \emph{any} solution is found during \emph{any} step of the walk is certainly at least as large as the probability $P(x_m = x^\star)$ that the walk finds $x^\star$ at the $m$-th step.
To analyze $P(x_m=x^\star)$, we follow in the steps of Sch\"oning \cite{Schoening99,SchoeningToranBook}, and focus on the evolution of the \emph{Hamming-distance} $d_H(x_l, x^\star)$ between the current configuration and the selected satisfying assignment $x^\star$. 

The fundamental insight is that if a clause $C_k$ is violated at the $l$-th step, then at least one of the three variables that appear in $C_k$ must differ between $x_l$ and the satisfying assignment $x^\star$.
Thus, the random flip decreases the Hamming distance to the solution with probability at least $1/3$:
\begin{align}\label{eqn:schoening_improvement}
	P\big( d_H(x_{l+1}, x^\star) = d_H(x_l, x^\star) - 1 \big)\geq \frac13.
\end{align}
This suggests to pass from a description of the process on bit strings to its projection
$x_l \mapsto d_H(x_l,x^\star)$ 
onto $\mathbb{N}$.
However, this would generally yield a process that would be no easier to analyze than the original one. 
One may for example note that although Sch\"oning-process  $(x_l)_l$ is Markovian on the space of bit-strings $\{0,1\}^{\times n}$, one cannot generally expect its projection $\big(d_H(x_l,x^\star)\big)_l$ to be Markovian on $\mathbb{N}$. 

The general idea for the analysis is to replace (via a coupling) the true projection $\big(d_H(x_l,x^\star)\big)_l$ with another process $(d_l)_l$ on $\mathbb{Z}$, which is Markovian and which moreover upper-bounds the true Hamming-distance,
\begin{equation}
	\label{FundamentalBound}
	d_H(x_l,x^\star)\leq d_l.
\end{equation}
More precisely, the Markov process $(d_l)_l$ is defined by the transition probabilities 
\begin{equation}
	\label{TransitionProbabilities}
	P(d_{l+1} = d_l +1) = \frac{2}{3},\quad P(d_{l+1} = d_l -1) = \frac{1}{3}.
\end{equation}

The transition probabilities (\ref{TransitionProbabilities}) can be interpreted as worst-case scenarios of each step in the Sch\"oning process.

From the bound (\ref{FundamentalBound}) it follows that $P(x_l = x^\star) \geq  P(d_l \leq 0)$. 
In other words, the success probability of the Sch\"oning-process is lower-bounded by the probability that the substitute-process $d_l$ reaches $0$.

Given the lower bound $P(d_m \leq 0)$ on the probability of success of each given walk, we expect at least one out of $N=1/P(d_m \leq 0)$ walks to find $x^\star$.
More precisely, if $\epsilon$ is the tolerated probability for failure, then the number of repetitions needed in order to find an existing solution satisfies
\begin{align}\label{eqn:sampling_overhead}
	N\geq 
	\frac{\log{\epsilon}}{\log(1-P(d_m\leq 0))}.
\end{align}

The required number $N$ of repetitions will be exponential in $n$.
It is then common to take a coarser point of view, and only analyze the corresponding \emph{rate} 

\begin{align}\label{eqn:rate}
	\gamma := -\lim_{n\rightarrow \infty}\frac{1}{n} \log 
	P(d_m\leq 0),
	\qquad\text{so that}\qquad
	N=O^{*}(2^{\gamma n}),
\end{align}
where $O^*$ denotes scaling behavior up to polynomial factors in order to achieve any constant probability of failure $\epsilon$.
With the choice $m = n$ (i.e., the termination time is equal to the number of variables) it turns out \cite{Schoening99,SchoeningToranBook} 
that 
$\gamma \leq \log\frac{4}{3}\approx 0.415$.

It is surprisingly technically difficult to rigorously derive the ``global bound'' $P(x_l=x^\star)\geq P(d_l \leq 0)$ 
from the  ``local bound'' (\ref{eqn:schoening_improvement}).
However, the Markovian version $(d_l)_l$ of the Hamming distance random walk is commonly accepted as a good (in fact, conservative) model of the Sch\"oning-process.
In the main body of this paper, we will therefore phrase our arguments in terms of that model.
More technical details on the relation between the two processes are given in Appendix~\ref{AppFromSchToZ}.

\subsection{Partial Groverizations: The general idea}

For random walks we naturally tend to think of the randomness as being generated whenever needed, like when we assign the initial state, or make the random choices along the path.
However, we can alternatively picture the walk as a deterministic process that is fed with an external random string $S$; a list from which it picks the next entry whenever a random choice is to be made. 
When the purpose of the walk is to find (an efficiently recognizable) solution to some computational problem, one can thus view the walk as a (deterministic) map that designates each input string $S$ as being ``successful'' or ``unsuccessful'', in the sense of the walk reaching the satisfying solution $x^\star$ or not. 
To this mapping, we can in principle apply a Grover-search procedure, since the walk (as well as the solution-recognition procedure) can be performed via reversible circuitry, and can thus also be implemented coherently. 

As described in the previous section, Sch\"oning's algorithm proceeds with an initialization, followed by a random walk on the space of $2^n$ assignments. 
The initialization requires $n$ bits of randomness, $S_I$, since the initial state is selected uniformly over all $2^n$ strings. 
A walk of length $m$ requires a string $S_W$ of $m\log3$ bits to encode the needed randomness. 
The $\log3$-factor is due to the fact that, at each step, the algorithm randomly selects which one of the three literals (of the first violated clause) should be flipped. 
An $m$-step Sch\"oning-walk can thus be viewed as a map from $S = (S_I,S_W)$ to a binary variable that tells us whether a satisfying assignment has been reached or not.

With a coherent circuit that implements this map, we can thus replace the uniformly distributed random variable $S$, with a uniform superposition over a corresponding number of qubits, and proceed via standard Grover-iterations \cite{Grover96}. 
We would expect such a procedure to yield a satisfying assignment at a run-time that scales as $O^{*}(2^{n{\gamma_G}})$ iterations, with $\gamma_G = \frac{1}{2}\log\frac{4}{3}\approx 0.208$ \cite{Ambainis04}, i.e., the standard quadratic speed-up. 
Up to a few constant qubits, one needs $n + (\log3 + \log L)m$ qubits to encode this map as a quantum circuit, where $L$ is the number of clauses in the $3$-SAT formula (more details are given in Section \ref{circuit}). Since the number of clauses grows linearly in $n$ for the regime of interest by the SAT phase-transition conjecture \cite{SATtransition}, and for the Sch\"oning walk $m = n$, the space complexity of such encoding is $O(n\log n)$.

The view of random walks as maps on random input strings opens up for the concept of partial Groverizations. 
Nothing would in principle prevent us from regarding only a \emph{part} of the input string $S$ as the input of the Grover-procedure, while keeping the rest of the string classical. 
Needless to say, one would generally expect the result to be less efficient than the ``full'' Groverization. 
However, the gain would be that the partial Groverization breaks the tasks into a collection of subproblems, each of which can be run in parallel on a quantum device that requires shorter coherence time. 

Although it seems reasonable to expect that such a division in principle is always possible, one may also expect that it in general would be challenging to find a quantum circuit that implements it in an economical manner. 
(We can always resort to a full coherent circuit for $S$ in its entirety, putting the ``classical part'' in a diagonal state.) 
However, there may be ``natural'' divisions of the process, which can be exploited. 
For  Sch\"oning's algorithm it is close to hand to consider the division $S = (S_I, S_W)$, i.e., the division of the required randomness into the initialization-part and the walk-part. 
One can thus consider two particularly natural classes of ``partial'' Groverizations of Sch\"oning's algorithm. 
For one of these, the \emph{Groverized Initialization (GI)}, the choice of the initial state is implemented coherently, while the walk is kept ``classical''. 
For the \emph{Groverized Walk (GW)}, the choice of initial state is kept classical, while the walk itself is performed coherently.

As described in Section \ref{SettingStageRunTime}, the actual analysis is based on the random walk $(d_l)_l$ on $\mathbb{Z}$, rather than the true Sch\"oning walk on strings in $\{0,1\}^{\times n}$. 
The idea is nevertheless the same; the required randomness is divided into the initialization and the walk \emph{per se}, resulting in GI- and GW-processes. 
As described in Section \ref{SettingStageRunTime}, the rate of the true Sch\"oning-process can be bounded by the rate of the substitute process $(d_l)_l$. It turns out that a similar argument can be made for GW (see Appendix \ref{AppFromSchToZ}), thus yielding a rigorous bound for the rate also in this case.  However, for the other processes we rather regard the $(d_l)_l$ process as a model of the genuine Sch\"oning-walk, without rigorous guarantees of analogous bounds.

\section{\label{partialgrover}Partial Groverizations}
The previous section introduced two types of partial Groverizations of Sch\"oning's algorithm, GI and GW, based on the division $S = (S_I, S_W)$, i.e. the initial and the walk randomness. In this section, we describe these schemes in detail and further discuss their ``fractional'' cases.

In the GI scheme, there is an outer loop that classically samples $S_W$, and is followed by a Grover-search inner loop over the space of all possible $S_I$. Similarly, GW starts with a classical outer loop that samples $S_I$ and is followed by a Grover-search inner loop over the space of all possible $S_W$ (this space is well-defined as the walk length is fixed).
We obtain Fractional Groverized Initialization (FGI) by adapting GI to a regime where only a fraction $z$ of the variables in the initialization can be searched coherently, with $0\le z \le1$. 
Fractional Groverized Walk (FGW) is similarly an adaption of GW to a regime where Grover-search can be performed on the randomness of walks of at most $m_q$ steps, with $0 \le m_q$. 
In both these fractional schemes, two classical outer loops contain a Grover-search inner loop.  
The algorithms introduced here depend on parameters ($N_1$, $N_2$, etc), that will be specified explicitly in Section \ref{sec:runtime}. 

All Grover searches will use an oracle derived from the function shown in Alg.~\ref{alg:oracle}:
It tests whether a Sch\"oning-walk with initial configuration $x\in\{0,1\}^n$ and walk randomness $w\in\{1,2,3\}^m$ will lead to a satisfying assignment.
For notational convenience, we let the elements of $w$ take ternary in values, with the interpretation that $w_l$ determines which of the three literals occurring in the first violated clause (if any) in step $l$ of the walk is flipped.
For a qubit-based implementation, it is not difficult to re-label the decision variables using $\lceil m \log 3\rceil$ binary variables.

\begin{algorithm}[H]
	\caption{Sch\"oning Walk \& Oracle}\label{alg:oracle}
	\begin{algorithmic}[1]
			\Function{Oracle}{$x_0,w$}
			\State \Return \textsc{True} if $\textsc{SchoeningWalk}(x_0,w)$ satisfies all clauses, else \textsc{False}
	\EndFunction
	\State
			\Function{SchoeningWalk}{$x,w$}  
				\For{$j=1...m$}
				\If{$x$ violates one of $C_1, \dots, C_L$}
						\State $k\leftarrow$ index of first unsatisfied clause
						\State $l\leftarrow$ index of the $w_j$-th variable occurring in $C_k$
						\State $x\leftarrow x$, with the $l$-th bit of $x$ flipped
					\EndIf
				\EndFor
				\State \Return $x$
	\EndFunction
	\end{algorithmic}
\end{algorithm}

For the different variants of partial Groverizations discussed below, we will fix a subset of arguments to the oracle, and consider it as a function of the remaining ones.
Fixed arguments will be denoted as subscripts, e.g.\ $\textsc{Oracle}_w: x\mapsto \textsc{Oracle}(x,w)$.
With these conventions, we have:

\begin{algorithm}[H]
	\caption{Groverized Initialization}\label{alg:gi}
	\begin{algorithmic}[1]
			\For{$i=1 ...  N_2$}
				\State $w\leftarrow$ uniformly random value from 
				$\{1, 2, 3\}^{\times m }$
				\State $x\leftarrow$ Grover-search for $\lfloor \sqrt{N_1}\rfloor$ iterations using \textsc{Oracle$_w$}()
				\If{$x$ satisfies all clauses}
					\State\Return $x$
				\EndIf
			\EndFor
	\end{algorithmic}
\end{algorithm}

\begin{algorithm}[H]
	\caption{Groverized Walk}\label{alg:gw}
	\begin{algorithmic}[1]
			\For{$i=1 ...  N_1$}
				\State $x_0\leftarrow$ uniformly random value from $\{0,1\}^{\times n}$
				\State $w\leftarrow$ Grover-search for $\lfloor \sqrt{N_2}\rfloor$ iterations using \textsc{Oracle$_{x_0}$}()
				\State $x\leftarrow \textsc{SchoeningWalk}(x_0,w)$ 
				\If{$x$ satisfies all clauses}
					\State\Return $x$
				\EndIf
			\EndFor
			\State \Return False
	\end{algorithmic}
\end{algorithm}

One may note that the Grover search in the Groverized walk only is guaranteed to succeed (with high probability) for a specific collection of initial states. The number of rounds $N_1$ of the outer loop is selected in such a way that it with high probability hits the set of advantageous initial states at least once, thus allowing the Grover-procedure to reach the satisfying assignment. Similar remarks apply to the other partial Groverizations.

Next, we discuss the ``fractional searches''. 
In the first one, the argument $x$ of the oracle is broken up as $x=(x_c, x_q)$ with $x_q$ taking $\lfloor z\cdot n \rfloor$ bits and $x_c$ being $\lceil (1-z)\cdot n \rceil$ bits long.
Here, $z\in[0,1]$ is a free parameter whose value will be determined below.

\begin{algorithm}[H]
	\caption{Fractional Groverized Initialization}\label{alg:fgi}
	\begin{algorithmic}[1]
			\For{$i=1 ...  N_2$}
				\State $w\leftarrow$ uniformly random value from $\{1, 2, 3\}^{\times m }$
				\For{$j=1 ...  N_1^{(c)}$}
				\State $x_c\leftarrow$ uniformly random value from $\{0,1\}^{\times \lceil(1-z)n\rceil}$
				\State $x_q\leftarrow$ Grover-search for $\Big\lfloor \sqrt{N_1^{(q)}}\Big\rfloor$ iterations using \textsc{Oracle$_{(x_c, w)}$}()
				\State $x=(x_c,x_q)$
					\If{$x$ satisfies all clauses} 
						\State\Return $x$
					\EndIf
				\EndFor
			\EndFor
			\State \Return False
	\end{algorithmic}
\end{algorithm}

The second fractional algorithm breaks up the walk randomness as $w=(w_c, w_q)$ with $w_c\in\{1,2,3\}^{m_c}$ and $w_q\in\{1,2,3\}^{m_q}$ respectively.
Again, the values of $m_c, m_q$ are chosen later.

\begin{algorithm}[H]
	\caption{Fractional Groverized Walk}\label{alg:fgw}
	\begin{algorithmic}[1]
			\For{$i=1 ...  N_1$}
				\State $x_0\leftarrow$ uniformly random value from $\{0,1\}^{\times n}$
				\For{$j=1 ...  N_2^{(c)}$}
					\State $w_c\leftarrow$ uniformly random value from $\{1, 2, 3\}^{\times m_c}$
					\State $w_q\leftarrow$ Grover-search for $\Big\lfloor \sqrt{N_2^{(q)}}\Big\rfloor$ iterations using \textsc{Oracle$_{(x_0,w_c)}$}()
					\State $w=(w_c,w_q)$
					\State $x\leftarrow \textsc{SchoeningWalk}(x_0,w)$ 
					\If{$x$ satisfies all clauses} 
						\State\Return $x$
					\EndIf
				\EndFor
			\EndFor
			\State \Return False
	\end{algorithmic}
\end{algorithm}

In the final algorithm, a fraction of $z\in[0,1]$ of both types of variables, the ones corresponding to the initialization and the ones corresponding to the walk, will be treated  quantum mechanically.

\begin{algorithm}[H]
	\caption{Evenly Fractionalized Grover}\label{alg:efg}
	\begin{algorithmic}[1]
		\For{$i=1 ...  N^{(c)}$}
		\State $x_c\leftarrow$ uniformly random value from $\{0,1\}^{\times \lceil (1-z) n\rceil}$
		\State $w_c\leftarrow$ uniformly random value from $\{1, 2, 3\}^{\times \lceil(1-z)m\rceil}$
			\State $(x_q,w_q)\leftarrow$ Grover-search for $\big\lfloor \sqrt{N^{(q)}}\big\rfloor$ iterations using \textsc{Oracle$_{(x_c,w_c)}$}()
			\State $w=(w_c,w_q)$
			\State $x_0=(x_c,x_q)$
			\State $x\leftarrow \textsc{SchoeningWalk}(x_0,w)$ 
			\If{$x$ satisfies all clauses} 
				\State\Return $x$
			\EndIf
		\EndFor
		\State \Return False
	\end{algorithmic}
\end{algorithm}

\section{Run-time Analysis}
\label{sec:runtime}

We will now lower-bound the probability of success of the various approaches.
As a preparation, in Sec.~\ref{sec:analysis_classical}, we give a brief account of the analysis of the classical case, before moving on to the Groverized versions in Sec.~\ref{sec:analysis_grover}.

\subsection{The classical Sch\"oning process}
\label{sec:analysis_classical}

The main ideas of the classical analysis are close to their presentation in Refs.~\cite{Schoening99,SchoeningToranBook}.
We work in the Markovian model $(d_l)_l$ for the behavior of the Hamming distances, as laid out in Sec.~\ref{SettingStageRunTime}.
Frequently, it will be convenient to measure quantities ``in units of $n$ or $m$''. For example, we will soon choose a number $\kappa\in[0,1]$ and assume that the initial value $d_0$ is equal to $\kappa n$. 
Of course, this only makes sense if $\kappa n$ is an integer.
In order to keep the notation clean, we will implicitly assume that such expressions have been rounded to the next integer.

Choose numbers $\kappa, \nu \in[0,1]$.
A given walk $(d_l)_l$ is certainly successful (in the sense that $d_m\leq 0$) if 
\begin{enumerate}
	\item
		The initial value is $d_0=\kappa n$,
	\item
    the random walk decreases the Hamming distance in exactly $\nu m$ of its $m$ steps, and
	\item
		the condition
		\begin{align}\label{eqn:walk_condition}
			\kappa n \leq (2\nu -1)m 
		\end{align}
		holds.
\end{enumerate}
Indeed, the right hand side of (\ref{eqn:walk_condition}) is the difference between the number of steps where the Hamming distance has been decreased, $\nu m$, and the number of steps where the Hamming distance has been increased, $(1-\nu )m$. 

For any fixed pair of values $\kappa,\nu$ subject to (\ref{eqn:walk_condition}), we will now compute the probability of this particular route to success.
Denote the first event by $E_1$ and the second event by $E_2$.
They occur with respective probabilities
\begin{align}\label{eqn:events_probs}
    P(E_1) &= \frac{1}{2^n}\binom{n}{\kappa n},
					 &
    P(E_2) &= \binom{m}{\nu m}\Big(\frac13\Big)^{\nu m}\Big(\frac23\Big)^{(1-\nu)m}.
\end{align}
Since the two events are independent, 
the success probability of the walk 
is lower-bounded by 

\begin{align}
    P(x_m=x^\star|\kappa)\ge
    P(d_m\le 0|x) &\ge P(E_1\land E_2) 
    = P(E_1)P(E_2) 
    = \frac{1}{2^n}\binom{n}{\kappa n} \binom{m}{\nu m}\Big(\frac13\Big)^{\nu m}\Big(\frac23\Big)^{(1-\nu)m}.
    \label{jointprob}
\end{align}
The various binomial coefficients can be conveniently related to entropies.
To this end, recall the definition
of the \emph{binary entropy function} 
\begin{equation*}
		H(p) = -p\log p - (1-p) \log(1-p)  \quad \text{for} \quad p \in [0,1],
\end{equation*}
and the \emph{relative entropy}
\begin{equation*}
	 D(p\parallel q) = - p\log q - (1-p)\log(1-q) - H(p)  \quad \text{for} \quad p,q \in [0,1]. 
\end{equation*}
Then using the well-known estimate \cite[Chapter 11.1]{cover}

\begin{equation*}
    \frac{1}{n+1}2^{nH(\kappa)} \le \binom{n}{\kappa n} \le 2^{nH(\kappa)},
\end{equation*}
Equation (\ref{jointprob}) can, after some straight-forward calculations, be concisely rewritten as
\begin{align}\label{eqn:entropic_bound}
    P(d_m\le 0|x) \gtrsim 2^{-(1-H(\kappa))n}2^{-D(\nu \parallel \frac13)m},
\end{align}
where $\gtrsim$ denotes an inequality holds asymptotically, up to a polynomial factor. 
Equation~(\ref{eqn:entropic_bound}) directly gives an upper bound on the rate $\gamma$ defined in (\ref{eqn:rate}). 
Since the rate expresses the logarithm of the complexity ``in units of $n$'', it makes sense to also express the length of the walk in terms of $\mu := m/n$.
Then:

\begin{align}
		\gamma
		= - \lim_{n \rightarrow \infty}\frac1n \log P(d_m\le 0|x)
		\leq
		1- H(\kappa) + \mu D(\nu\parallel 1/3)
		=:
    \gamma(\mu,\kappa,\nu).  \label{gamma}
\end{align}
In particular, the infimum of $\gamma(\mu,\kappa,\nu)$ subject to the constraints (\ref{eqn:walk_condition}) and $0 \leq \mu, 0\leq \nu,\kappa \leq 1$ is a valid bound for $\gamma$.
We will perform such optimizations explicitly for the partially Groverized versions in Sec.~\ref{sec:analysis_grover}.
For the classical procedure, we just state the final result:
\begin{align}\label{eqn:schoening_optimal}
	\mu = 1, \qquad \kappa=\frac13, \qquad \nu=\frac23, \qquad \gamma_{\mathrm{C}} = \log\frac43\simeq 0.4150.
\end{align}

Remark:
One might be tempted to search a tighter bound by summing the contributions to the probability of success that arise from all consistent values for $\mu,\kappa,\nu$, instead of just considering the extremal value.
However, the rate of a sum of exponentially processes is asymptotically determined by the rate of the dominating summand alone, i.e.\ for all collections of $\gamma_i>0$, it holds that
\begin{align*}
	 \lim_{n\to\infty}-\frac1n \log \sum_i 2^{-\gamma_i n} = \sup_{i} \gamma_i
\end{align*}
(assuming convergence).
Therefore, considering only the dominating term does not affect the overall asymptotic rate.

\subsection{Partially Groverized processes}
\label{sec:analysis_grover}

In this section, we derive the main results of this paper:
Bounds on the asymptotic rates for partially Groverized versions of Sch\"oning's scheme.

\subsubsection{Groverized Initialization, Algorithm~\ref{alg:gi}}
\label{sec:gi}

For the parameters $N_1, N_2$, we choose constant multiples of $1/P(E_1), 1/P(E_2)$ respectively.
The value of the constant depends on the acceptable probability $\epsilon$ of failure, as exhibited in Eq.~(\ref{eqn:sampling_overhead}).
Since this constant does not effect the rate, we will not specify it here.
The probabilities do depend essentially on the parameters $\mu,\kappa,\nu$, though. 
We will therefore write $N_1(\kappa)$ and $N_2(\mu,\nu)$.
Because the asymptotic complexity of a Grover search is the square root of the classical complexity,
the rate function of GI 
is then given by
\begin{align}\label{rateGI}
	\gamma_{\textrm{GI}}(\mu,\kappa,\nu) 
	&=  
	\lim_{n \rightarrow \infty}\frac1n \log \left(\sqrt{N_1(\kappa)}N_2(\nu,\mu)\right) 
	= 
	\frac{1 - H(\kappa)}2 
	+  
	\mu D(\nu \parallel 1/3). 
\end{align}
Likewise, the required coherence time scales with the number of Grover iterations, i.e.\ as $O^*(2^{\chi n})$, for
\begin{align}
    \chi(\kappa) 
		:=  \lim_{n \rightarrow \infty}\frac1n \log \sqrt{N_1(\kappa)}
		= \frac{1 - H(\kappa)}2.
\end{align}
The parameters are constrained by
\begin{equation}
	\label{constsGIGW}
	\begin{aligned}
			0 \le \kappa \le 1,
			\quad
			0 \le \mu,
			\quad
			0 \le \nu \le 1,
			\quad
			\frac{\kappa}{2\nu -1} \leq \mu,
	\end{aligned}
\end{equation}
where the final condition is a re-arranged version of the success criterion (\ref{eqn:walk_condition}).

We now determine the minimal rate $\gamma_{\mathrm{GI}}$ over the consistent parameters.
Because relative entropy is non-negative, it is always advantageous to reduce the value of $\mu$ until it is minimal subject to the constraints.
This is achieved by changing the final inequality in (\ref{constsGIGW}) to equality.
Re-arranging, we arrive at
\begin{equation}
	\label{constsGIGW2}
	\begin{aligned}
			0 \le \kappa \le 1,
			\quad
			0 \leq \mu,
			\quad
           \nu = \frac12+\frac{\kappa}{2\mu},
	\end{aligned}
\end{equation}
which allows us to eliminate $\nu=\nu(\kappa,\mu)$ from the problem.
Varying $\gamma$ with respect to $\mu$ gives rise to the criticality condition
\begin{align}
	0
	\stackrel{!}{=}
	\partial_\mu \gamma_{\textrm{GI}}(\kappa,\mu)
	= 
	\partial_\mu \,\mu D(1/2+\kappa/(2\mu) \parallel 1/3)
	=
	\frac12\log\Big(\frac{\mu+\kappa}{\mu}\Big) +\frac12\log\Big(\frac{\mu-\kappa}{\mu}\Big)+\log3-\frac32.
		\label{partialv} 
\end{align}
This can be solved explicitly e.g.\ using a computer algebra system \cite{our-data}, leading to
\begin{align}\label{eqn:mu_nu}
	\mu = 3\kappa
	\qquad\Rightarrow\qquad
	\nu=\frac23,
	\quad
	\mu D(\nu \parallel 1/3) = \kappa.
\end{align}

Eliminating $\mu$, we get
\begin{align}\label{eqn:gi_parametric}
	\begin{split} 
		\gamma_{\textrm{GI}}(\kappa)
		&=
		\frac{1-H(\kappa)}2 + \kappa, \\
		\chi_{\mathrm{GI}}(\kappa)
		&=
		\frac{1-H(\kappa)}2.
	\end{split}
\end{align}

The pair of equations (\ref{eqn:gi_parametric}) contain all information about the asymptotic behavior of the Groverized Initialization procedure.
Each value of $\kappa$ gives a solution for the two undetermined constants $N_1(\kappa), N_2(\mu=3\kappa, \nu=\frac23)$ in Alg.~\ref{alg:gi}, in such a way that it will run with a small probability of returning a false negative.
Varying $\kappa$, we thus obtain a family of algorithms that find different compromises between the required coherence time and the total runtime. 
The achievable pairs of values are shown in Fig.~\ref{fig:gi_tradeoffs}.

\begin{figure}[ht]
    \begin{center}
			\includegraphics[scale=.75]{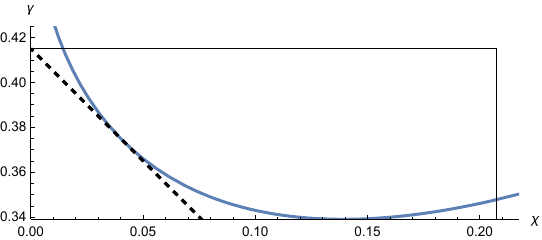}
    \end{center}
		\caption{\label{fig:gi_tradeoffs} 
			Rates $\big(\chi_{\mathrm{GI}}(\kappa), \gamma_{\mathrm{GI}}(\kappa)\big)$
			for the required coherence time and the total runtime of the Groverized Initialization algorithm, as the parameter $\kappa$ is varied.
			The horizontal bar denotes the runtime rate achieved by the classical Sch\"oning process. 
			In other words, points above this line are uninteresting.
			The vertical bar denotes the coherence rate that allows one to run a completely Groverized version of the Sch\"oning process.
			This, arguably, makes points to the right of this line uninteresting as well.
			Points to the left of the minimum (at $(\gamma,\chi)\simeq (0.339, 0.139)$) can represent advantageous choices if either the total coherence time of a quantum computer is limited, or a larger degree of parallelization is desired.
			The dashed line is the lower bound on the runtime rate given the coherence time, as introduced in Fig.~\ref{fig:chart}.
			It is achieved for $\kappa=\frac13$.
		}
\end{figure}

Finally, we explicitly determine the minimal rate achievable in the Groverized Initialization scheme.
With the help of a computer algebra system \cite{our-data}, one easily finds
\begin{align}\label{partialk}
    0 \stackrel{!}{=} 
		\partial_\kappa \gamma_{\textrm{GI}}(\kappa)=\frac12\log\frac{\kappa}{1-\kappa}+1
		\quad\Leftrightarrow\quad
		\log\Big(\frac1{\kappa}-1\Big) = 2 
		\quad\Rightarrow\quad
		\kappa = \frac15 
\end{align}
which gives
\begin{align}
	\label{eqn:gi_optimal}
	\mu = \frac35, 
	\qquad
	\gamma_{\textrm{GI}} = \frac{3-\log5}{2}\approx 0.339,
	\qquad
	\chi_{\textrm{GI}} \simeq 0.139.
\end{align}

Remark:
One can cast the final minimization into the form
\begin{align*}
	\gamma_{\textrm{GI}} = 
	\inf_{\kappa} \gamma_{\textrm{GI}}(\kappa)
	=
	\inf_{\kappa} 
	\left(
		\frac{1-H(\kappa)}2 + \kappa
	\right)
	=
	-
	\sup_{\kappa} 
	\left(
		-\kappa
		-	
		\frac{H(\kappa)-1}2
	\right).
\end{align*}
This expression shows that the optimization amounts to computing a Legendre transform.
Indeed, with $f(\kappa):=1/2(H(\kappa)-1)$, the right hand side equals $-f^*(-1)$.
For physicist readers, it might be amusing to note that $S(n \kappa)=n H(n \kappa)$ formally equals the entropy of an $n$-spin paramagnet as a function of the total magnetization.
The Legendre transform of the entropy is a Massieu thermodynamic potential, equal to $F/T$ (with $F$ the free energy) expressed as a function of the inverse temperature \cite[Chapter~5.4]{callen}.
We will, however, not pursue this analogy here.

\subsubsection{Groverized Walk, Algorithm~\ref{alg:gw}}

The analysis proceeds in close analogy to the above case.
The asymptotic rate function of GW is 
\begin{align}
    \gamma_{\textrm{GW}}(\kappa,\mu,\nu) 
		&= 
		\lim_{n \rightarrow \infty}\frac1n \log (N_1(\kappa)\sqrt{N_2(\nu,\mu)})  
    = 
		{1 - H(\kappa)} + \frac \mu{2} D(\nu \parallel 1/3) \label{rateGW},
\end{align}
subject to the set of constraints (\ref{constsGIGW}). 
The parameters $\nu, \mu$ can be treated in exactly the same way as before, leading again to (\ref{eqn:mu_nu}).
In particular, the coherence time rate takes the simple form $\chi=\kappa/2$, which allows us to eliminate $\kappa$ in favor of $\chi$.
We immediately obtain
\begin{align}
    \gamma_{\textrm{GW}}(\chi) 
		= 1-H(2\chi) + \chi.
\end{align}

Again, it is not difficult to solve for the lowest runtime \cite{our-data}: 
\begin{align}\label{eqn:gw_optimal}
	\mu
	=
  3(\sqrt{2}-1), 
	\qquad
	\kappa = \sqrt{2}-1,
	\qquad
	\gamma_{\textrm{GW}} \approx 0.228,
	\qquad
	\chi_{\textrm{GW}} 
	\simeq  0.2071.
\end{align}
At the optimal point, the runtime scales with a rate that is very close to the one of a full Groverization of Sch\"oning's process, namely $\gamma_{\mathrm{FG}}=\gamma_{\mathrm{C}}/2\simeq .2075$.
The flip side is that the required coherence times are basically identical:
\begin{align*}
	\chi_{\textrm{FG}} 
	-
	\chi_{\textrm{GW}} 
	\simeq
	0.0004.
\end{align*}

The findings are summarized in Fig.~\ref{fig:gw_tradeoffs}.

\begin{figure}[ht]
    \begin{center}
			\includegraphics[scale=.6]{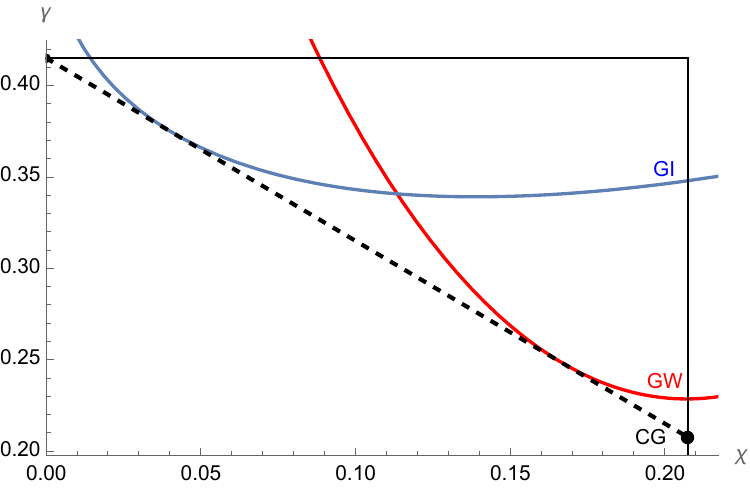}
    \end{center}
		\caption{\label{fig:gw_tradeoffs} 
			The runtime rate vs coherence time rate curves for 
			Groverized Initialization (GI, blue) and
			Groverized Walk (GW, red).
			The point marked ``CG'' at the bottom right of the diagram represents the complete Groverization of the Sch\"oning process.
			For long coherence times, GW is preferable, while for shorter coherence times GI achieves a lower total runtime.
		}
\end{figure}

\subsubsection{Fractional Groverized Initialization, Algorithm~\ref{alg:fgi}}
\label{sec:fgi}

In the case of Alg.~\ref{alg:fgi}, the initial Hamming distance is the sum of two terms $d_0 =  \kappa_c (1-z)n + \kappa_q z n$, which model $d_H(x_c,x_c^\star)$ and $d_H(x_q,x_q^\star)$ respectively.
Define the analogues 
\begin{align*}
	P(E_1^c) = \frac{1}{2^{(1-z)n}}\binom{(1-z)n}{\kappa_c (1-z)n},
	\qquad
	P(E_1^q) = \frac{1}{2^{z n}}\binom{z n}{\kappa_q  zn}
\end{align*}
of $P(E_1)$ introduced in Eq.~(\ref{eqn:events_probs}).
Analogous to the discussion in Sec.~\ref{sec:gi}, the parameters $N_1^c, N_1^q$ are defined as the reciprocals of these probabilities, times a constant that influences the probability of a false negative, but will not be discussed as it has no impact on the asymptotic rates.
The success criterion is now
\begin{align*}
	(1-z)\kappa_c + z\kappa_q &\leq (2\nu-1)\mu
\end{align*}
and the other constraints are
\begin{align*}
		0 \le \kappa_c, \kappa_q, \nu, z \le 1, 
		\qquad
		0 \le \mu.
\end{align*}
The asymptotic rate function for the runtime of FGI reads
\begin{align}
    \gamma_{\textrm{FGI}}(\kappa_c,\kappa_q,\nu,\mu;z) 
		&= \lim_{n\rightarrow \infty} \frac1n\log\Big(N_1^c(\kappa_c;z)\sqrt{N_1^q(\kappa_q;z)}N_2(\nu,\mu)\Big) \nonumber\\
    &= (1-z)\left(1- H(\kappa_c)\right) + \frac z2\big(1- H(\kappa_q)\big) + \mu D(\nu \parallel 1/3)
\end{align}

Arguing as in Sec.~\ref{sec:gi}, the inequality in the success criterion may be replaced by an equality.
Solving for $\nu$ gives
\begin{align*}
	\nu= \frac12 + \frac{(1-z)\kappa_c + z\kappa_q}{2\mu}.
\end{align*}
We proceed as in the first two cases. 
Criticality of $\partial_{\mu}\gamma_{\textrm{FGI}}$ with respect to $\mu$ occurs at

\begin{align*}
	\mu = 3((1-z)\kappa_c + z \kappa_q)
	\qquad\Rightarrow\qquad
	\mu D(\nu \parallel 1/3) = (1-z)\kappa_c + z \kappa_q, \quad \nu=\frac23.
\end{align*}
Plugging in, we arrive at
\begin{align}
	\begin{split}\label{eqn:fgi_gamma}
		\gamma_{\textrm{FGI}}(\kappa_c, \kappa_q;z) 
		 &= 
		 (1-z)
		 \left(
			 1- H(\kappa_c)
			 +
			\kappa_c 
		 \right) 
		 + 
		 z\left(
			 \frac{1- H(\kappa_q)}2
			+  \kappa_q 
		 \right). 
	\end{split}
\end{align}
In other words, the runtime rate function is a convex combination of the ones for the classical Sch\"oning process and for the GI scheme, with weights $(1-z), z$ respectively.
Because the classical part does not affect the coherence time, we may set $\kappa_c$ to its optimal value $\kappa_c^*=1/3$
(c.f.\ Eq.~(\ref{eqn:schoening_optimal})).
Geometrically, as we vary $z\in[0,1]$, Eq.~(\ref{eqn:fgi_gamma}) describes a line connection $(\chi_{\mathrm{GI}}(\kappa_q), \gamma_{\mathrm{GI}}(\kappa_q))$ with the parameters of the classical Sch\"oning process $(0,\gamma_{\mathrm{C}})$.
By the convexity of the GI curve, the fractional algorithm will have a better runtime rate to the left of the value of $\kappa_q$ at which the line becomes tangent to the curve.
In other words, the critical $\kappa_q$ is defined by the condition
\begin{align*}
	\frac{\partial \gamma_\mathrm{GI}}{\partial\chi} = \frac{\gamma_\mathrm{GI}-\gamma_{\mathrm{C}}}{\chi}.
\end{align*}
By a computer calculation \cite{our-data}, this happens for $\kappa_q=\frac13$ (i.e.\ equal to $\kappa_c$), resulting in the following curve:
\begin{figure}[H]
\begin{center}
	\includegraphics[scale=.75]{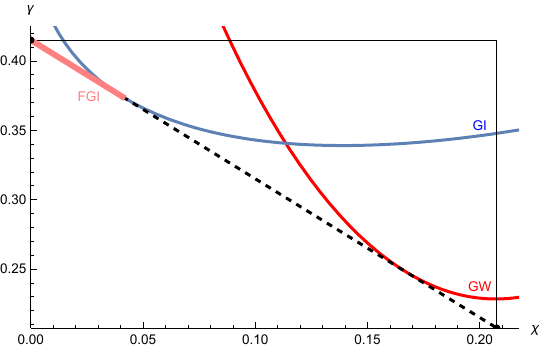}
   \end{center}
      \caption{The runtime rate vs coherence time rate for the FGI algorithm. This fractional scheme's performance is the convex combination of the classical point $(0,\gamma_\mathrm{C})$, and GI at the tangent point to the theoretical lower bound. One can note that the FGI partially saturates the optimal performance relation.}
\end{figure}

\subsubsection{Fractional Groverized Walk, Algorithm~\ref{alg:fgw}}

In the FGW scheme, we assume that the classical and Groverized walks decrease the Hamming distance in exactly $\nu_cm_c$ and $\nu_qm_q$ steps, respectively, where we have used a subscript to differentiate between the classical and Groverized random walks. 
The probabilities of such walks occurring is given by:
\begin{align}
    P(E^c_2) &= \binom{m_c}{\nu_c m_c}\Big(\frac13\Big)^{\nu_c m_c}\Big(\frac23\Big)^{(1-\nu_c)m_c},
    &
     P(E^q_2) &= \binom{m_q}{\nu_q m_q}\Big(\frac13\Big)^{\nu_q m_q}\Big(\frac23\Big)^{(1-\nu_q)m_q}
\end{align}
Analogous to the discussion in Sec.~\ref{sec:gi}, the parameters $N_2^c, N_2^q$ are defined as the reciprocals of the probabilities $P(E^c_2),P(E^q_2)$, times a constant that influences the probability failure, but will not be discussed as it has no impact on the asymptotic rates.
We further parameterize the walk lengths as $m_c = \mu_cn$ and $m_q = \mu_qn$. 
The runtime rate is
\begin{align}
    \gamma_{\textrm{FGW}}(\kappa,\nu_c,\mu_c,\nu_q,\mu_q) &= \lim_{n \rightarrow \infty} \frac1n \log\left(N_1(\kappa)N_2^c(\nu_c,\mu_c)\sqrt{N_2^q(\nu_q,\mu_q)}\right) \nonumber\\
    &= 
		1- H(\kappa) + \mu_cD(\nu_c\parallel 1/3) + 
		\frac{\mu_q}2D(\nu_q\parallel 1/3)
\end{align}
with parameters subject to the constraints

\begin{equation}
\label{constsFGW}
    \begin{aligned}
        0 &\le \kappa \le 1, \\
        0 &\le \mu_c,\mu_q, \\
        0 &\le \nu_c,\nu_q \le 1,\\
        \kappa &\leq (2\nu_c-1)\mu_c + (2\nu_q-1)\mu_q.
    \end{aligned}
\end{equation}

The first steps of the analysis should now be familiar.
There is no loss of generality in assuming that the final inequality is tight, which can be re-arranged to give

\begin{align*}
	\nu_q
	=
	\frac{\kappa - (2\nu_c-1)\mu_c}{2\mu_q} + \frac12.
\end{align*}
The rate $\gamma_{\mathrm{FGW}}$ is stationary as a function of $\mu_q$ if
\begin{align*}
	\mu_q = 3 (\kappa - (2\nu_c-1)\mu_c )
	\qquad\Rightarrow\qquad
	\nu_q = \frac23,
	\quad
	\frac{\mu_q}2D(\nu_q\parallel 1/3)
	=
	1/2 (\kappa - (2\nu_c-1)\mu_c ) 
	=
	\chi(\kappa,\nu_c,\mu_c).
\end{align*}

Eliminating $\kappa$ in favor of the coherence rate $\chi$ gives
\begin{align*}
	\kappa = 2\chi + (2\nu_c-1)\mu_c
\end{align*}

and thus
\begin{align*}
	\mu_q = 
	6\chi,
	\qquad
	\gamma_{\textrm{FGW}}(\nu_c,\mu_c;\chi) 
	&= 
	1- H( 2 \chi + (2\nu_c-1)\mu_c )
	+ 
	\mu_c D(\nu_c\parallel 1/3)
	+
	\chi.
\end{align*}
We now need to minimize $\gamma_{\textrm{FGW}}$ for fixed $\chi$ as a function of $\mu_c, \nu_c$, subject to
\begin{align*}
	0&\leq
	2\chi + (2\nu_c-1)\mu_c \leq 1, \\
	0 &\leq \mu_c, \\
	0 &\leq \nu_c \leq 1. 
\end{align*}

We may assume that $\mu_c\neq 0$, for else we are just replicating the GW scheme.
A computer calculation \cite{our-data} gives
\begin{align*}
	\partial_{\mu_c} 
	(\gamma \ln 2)
	+ 
	\frac{2-4\nu_c}{4\mu_c} 
	\partial_{\nu_c} 
	(\gamma \ln 2)
	=
	-\arctan(1-2\nu_c) + \ln(3-3\nu_c) - \frac12\ln2,
\end{align*}
which has zeros at $\nu_c=\frac13$ and $\nu_c=\frac23$.

For $\nu_c=\frac13$, one finds
\begin{align*}
	\partial_{\mu_c}
	(\gamma \ln 2)
	=
	\frac23 \arctan(1-4\chi+2/3 \mu_c)
\end{align*}
which has one zero, at $\mu_c = \frac32(4\chi-1)$.
The constraint $\mu_c\geq 0$ then implies $\chi\geq \frac14$.
But this is larger than the coherence time rate $\gamma_{\mathrm{C}}/2\simeq 0.208$ sufficient to implement a completely Groverized version of Sch\"oning's process, so this solution is not of interest.

We turn to the other solution, $\nu_c=\frac23$.
For it,
\begin{align*}
	\partial_{\mu_c}
	(\gamma \ln 2)
	=
	1/3 (-2 \arctanh(1 - 4 \chi - (2 \mu_c)/3) + \ln(2)),
\end{align*}
which has one zero:
\begin{align*}
	\mu_c=1-6\chi
	\qquad
	\Rightarrow
	\qquad
	\mu_q=6\chi,
	\quad
	\nu_c=\nu_q=\frac23,
	\quad
	\gamma_{\mathrm{FGW}} = \gamma_{\mathrm{C}} - \chi.
\end{align*}

The runtime vs coherence rate curve for the FGW scheme is given in the following figure:
\begin{figure}[H]
\begin{center}
	\includegraphics[scale=.75]{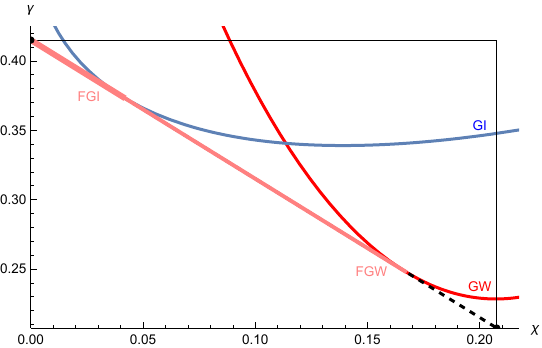}
\end{center}
\caption{The runtime rate vs coherence time rate for the FGW algorithm. This fractional scheme's performance connects the GW curve to the classical Sch\"oning point and is tangent to the curve. It achieves the optimal performance relation partially for a larger regime than FGI and for low coherence times, it comes to lie on top of the FGI line.}
\end{figure}

\subsubsection{Evenly Fractionalized Grover}

The runtime rate is
\begin{align}
	\gamma_{\textrm{EFG}}
    &= 
		(1-z)\Big(
			1- H(\kappa_c)
			+
			\mu_c D(\nu_c \parallel 1/3)
		\Big) 
		+
		z/2\Big(
			1- H(\kappa_q)
			+
			\mu_q D(\nu_q \parallel 1/3)
		\Big)
\end{align}
with success criterion
\begin{align*}
	(1-z)\kappa_c + z\kappa_q 
	=
	(1-z)(2\nu_c-1)\mu_c + 
	z(2\nu_q-1)\mu_q,
\end{align*}
which is in particular true if the following two equations hold
\begin{align*}
	\kappa_c 
	=
	(2\nu_c-1)\mu_c,
	\qquad
	\kappa_q
	=
	(2\nu_q-1)\mu_q.
\end{align*}
But this is just the convex interpolation between a completely classical and a completely Groverized process.
In particular, by choosing the parameters as for the original Sch\"oning process
\begin{align*}
	\nu_c=\nu_q=\frac23,
	\quad
	\kappa_c=\kappa_q=\frac13,
	\quad 
	\mu_c=\mu_q=1,
\end{align*}
we obtain a coherence time--runtime rate curve that linearly connects the classical point $(0,\gamma_{\mathrm{C}})$ to the completely Groverized one $(\gamma_{\mathrm{C}}/2, \gamma_{\mathrm{C}}/2)$ (Fig.~\ref{fig:all_tradeoffs}).

\begin{figure}[ht]
    \begin{center}
			\includegraphics[scale=.75]{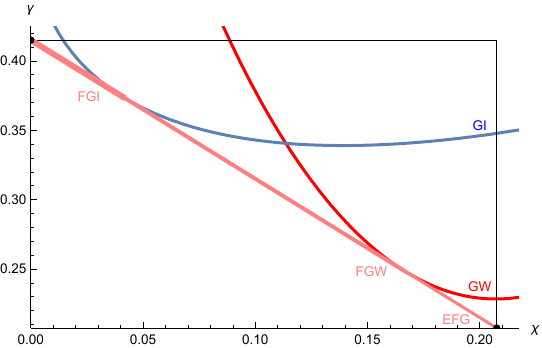}
    \end{center}
		\caption{\label{fig:all_tradeoffs} 
			Runtime--coherence time rate curves for the covered algorithms.
			The linear interpolation between the classical and the completely Groverized points are realizable using an increasing number of  methods -- first only EFG, then also FGW, finally also FGI -- as the coherence time decreases.
		}
\end{figure}

\subsection{A heuristic de-randomization of the GI schemes}

In this section, we provide evidence that the Groverized initialization schemes can reach further into the  $\gamma$--$\chi$ chart than what the Markovian model suggests.
To see why this is plausible,
note that the role of randomness for the initial configuration $x$ is very different from the role of randomness for the walk decisions $w$.
In the first case, there is an ``absolute measures of the quality of the initial configuration'', namely the Hamming distance to the solution.
The probability that the walk does find the solution is quite obviously a function of that metric.
Therefore, baring major algorithmic insights, it is unavoidable to consider many different initial configurations before encountering  one that will likely lead to a solution.

In contrast, it is not implausible that ``every walk works for equally many initial configurations'', i.e.\ that there are no choices for $w$ that are ``intrinsically better than others''.
More precisely, it seems reasonable to assume that for 
sufficiently large $n$, and
generic SAT formulas, it holds that with high probability in $w$
\begin{align}\label{eqn:w_doesnt_matter}
	&-\frac1{n} \log \left(
	\Pr_x[ \textsc{SchoeningWalk}(x, w) = x^\star \,|\, d_H(x,x^\star)=h,w] \right) \\ \nonumber
	&\simeq
	-\frac1{n} \log\left(
		\Pr_{x,w'}[ \textsc{SchoeningWalk}(x, w') = x^\star \,|\, d_H(x,x^\star)=h]
	\right).
\end{align}
The right hand side can be easily calculated, as by Ref.~\cite{Schoening99}, for $\mu=3$, 
\begin{align*}
	\Pr_{x,w}[ \textsc{SchoeningWalk}(x, w) = x^\star \,|\, d_H(x,x^\star)=h]
	=
	2^{-h}.
\end{align*}

Under Assumption (\ref{eqn:w_doesnt_matter}), one can restrict the outer loop over $w$'s from Alg.~\ref{alg:gi} to $N_2=1$ iteration, and compensate by increasing the number of Grover iterations for $x$ to $N_1=O^*(2^{\gamma_{\mathrm{C}}/2n})$.
In other words, the Groverized Initialization scheme with these parameters would lie on the optimal point 
$(\chi,\gamma) = (\gamma_{\mathrm{C}}/2, \gamma_{\mathrm{C}}/2)$.

Being even bolder, one could then speculate that the analysis of Sec.~\ref{sec:fgi} carries 
over and that, as one varies the fraction of initialization bits that are subjected to a Grover search, one could trace out the optimal $(\chi,\gamma)$-line.
In other words, it does not seem impossible that the following Alg.~\ref{alg:hfgi}, with parameter choice
\begin{align*}
	N_1^{(c)} = O^*(2^{\gamma_{\mathrm{C}}(1-z)n}),
	\qquad
	N_1^{(q)} = O^*(2^{\gamma_{\mathrm{C}}zn/2}),
\end{align*}
achieves the optimal trade-off.

\begin{algorithm}[H]
	\caption{Heuristically De-Randomized Fractional Groverized Initialization}\label{alg:hfgi}
	\begin{algorithmic}[1]
			\State $w\leftarrow$ uniformly random value from $\{1, 2, 3\}^{\times m}$
				\For{$j=1 ...  N_1^{(c)}$}
				\State $x_c\leftarrow$ uniformly random value from $\{0,1\}^{\times \lceil(1-z)n\rceil}$
				\State $x_q\leftarrow$ Grover-search for $\Big\lfloor \sqrt{N_1^{(q)}}\Big\rfloor$ iterations using \textsc{Oracle$_{(x_c, w)}$}()
				\State $x=(x_c,x_q)$
					\If{$x$ satisfies all clauses} 
						\State\Return $x$
					\EndIf
				\EndFor
			\State \Return False
	\end{algorithmic}
\end{algorithm}

To gather evidence in favor of Assumption~(\ref{eqn:w_doesnt_matter}), we have resorted to numerical methods.
A first ansatz is to compute the l.h.s.\ of Eq.~(\ref{eqn:w_doesnt_matter}) exactly, which is possible for small values of $n$ by iterating over all $2^n$ assignments to $x$.
Results are shown in Fig.~\ref{fig:numerics_20_b} for a randomly chosen set of $3$-SAT formulas with $n=20$ variables, $L=91$ clauses. 
The number of satisfying assignments $t_0$ of the formulas are varied.
Only the case $t_0=1$ can be directly compared to the analytic bounds.
However, note that even for this case, the empirically observed rate of $\gamma_{\mathrm{GI}}\simeq .12 \pm .02$ is much lower than the value $\gamma_{\mathrm{C}}/2 \simeq .208$ that we would expect theoretically.
Presumably, $n=20$ is still too small to show the asymptotic behavior.

\begin{figure}[ht]
	\centering
        \includegraphics[width=.45\textwidth]{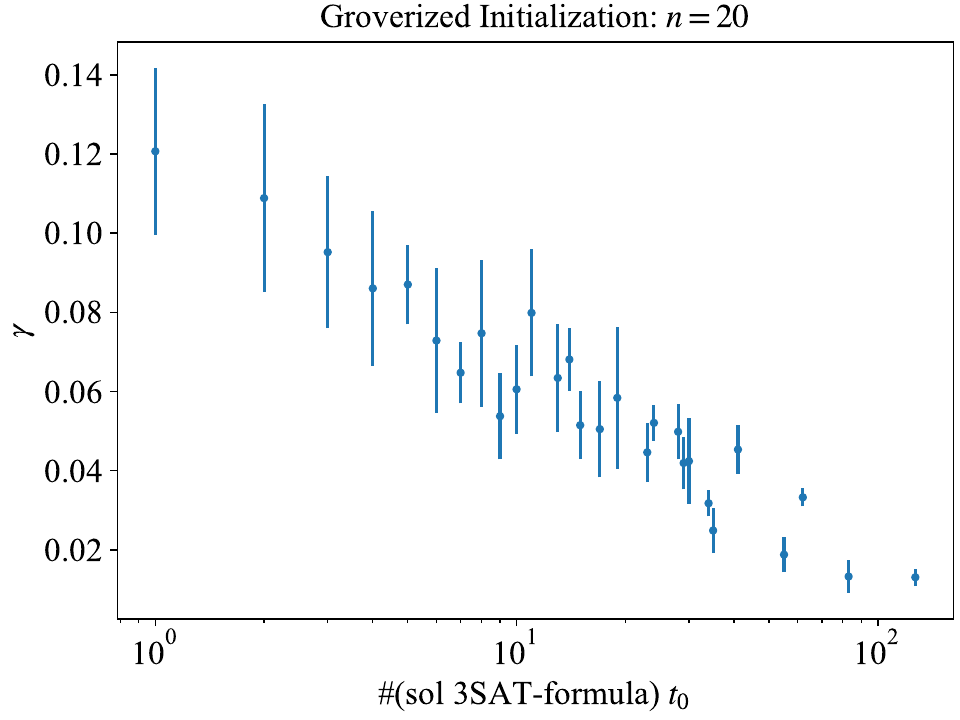}
        \caption{\label{fig:numerics_20_b}
					Plot of the runtime rate for the heuristically de-randomzied GI scheme.
					Error bars indicate variation as a function of the formulas and the walk variables $w$.
					On the $x$-axis, we show the number of satisfying assignments in the formula. 
					Only the case of $t_0=1$ should be directly comparable to the analytic bounds.
					The empirically observed behavior is much better than the analytic results, suggesting that $n=20$ is too small to capture the asymptotic behavior.
			}
\end{figure}

To test this assumption, we had to turn to numerical heuristics, to at least probe the behavior for much larger values of $n$, where an exact computation is no longer possible.
The results are shown in Fig.~\ref{fig:runtime_1414}.
We used a SAT instance with $n=1414$ variables that we believe to have a single satisfying assignment $x^\star$ which is explicitly known.
To generate the instance, a 128-bit plain text was encoded by a 128-bit key using the XTEA block cipher truncated to three rounds.
The formula represents the conditions on an input key to map the known plain text to the known ciphertext. 
The clauses are designed such that they enforce the correct evaluation of bit-wise operations of the algorithm with respect to the given input and output. 
XTEA was restricted to three rounds in order to keep the size of the formula manageable. 
While we have no formal proof, it is reasonable to assume that there is a unique key that satisfies the formula.
This is supported by consistency checks in terms of running SAT solvers on a version of this problems with even fewer rounds \cite{philipp-private-commutation}.

Let us denote the sphere of strings with Hamming distance $h$ from $x^\star$ by $M^h(x^\star)$. For a fixed walk randomness $w$, and for $h=1, \dots 11$, we have drawn $x$ uniformly from $M^h(x^\star)$.
In order to compare the numerical results to the theory prediction, we have to use the value of the right hand side of Assumption~(\ref{eqn:w_doesnt_matter}) for non-asymptotic values of $n$.
The following plot shows the empirically estimated probabilities of Sch\"oning's walk (with $\mu=3$) arriving at the solution, when starting from a random initial configuration of given Hamming distance.
The findings show the expected behavior of averaging over $w$, already for a fixed random value of $w$.
In this sense, they are compatible with Assumption~(\ref{eqn:w_doesnt_matter}).
We note, however, that we were not able to probe the assumption for larger values of $h$.
Garnering a better understanding for the concentration properties of the Sch\"oning walk as a function of the walk choices remains therefore an open question.

\begin{figure}[ht]
    \centering
    \includegraphics[width=.45\textwidth]{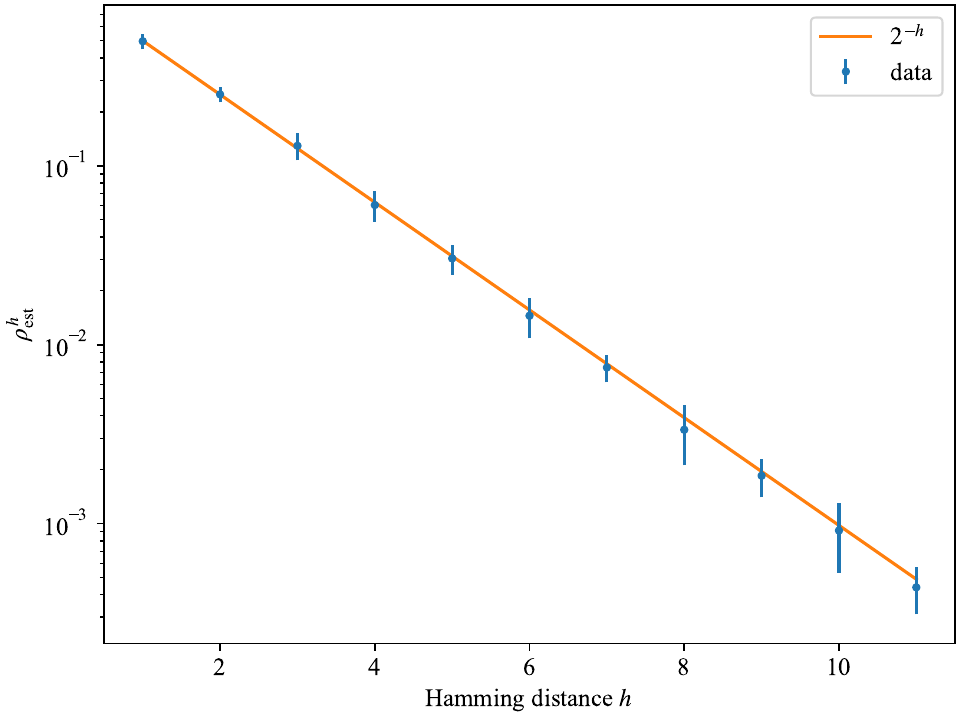}
   \caption{\label{fig:runtime_1414}
		 Estimated probability for a uniformly random initial configuration $x$ with Hamming distance $h$ to be mapped to $x^\star$ under a Sch\"oning walk, for a fixed, randomly chosen set of walk decisions $w$ (c.f.\ Alg.~\ref{alg:hfgi}).
		 The SAT instance has $n=1414$ variables and is believed to have a unique satisfying assignment \cite{philipp-private-commutation}.
		 For each data point, $10^4$ initial configurations $x$, were sampled uniformly from the Hamming distance sphere $M^h\left(x^\star\right)$. 
 The results agree well with the theoretical prediction under Assumption~(\ref{eqn:w_doesnt_matter}) (orange line). 
 }
\end{figure}

\section{Circuits\label{circuit}}

In this section, we discuss an implementation of the partial Groverization schemes and present the main building blocks of their quantum circuits. Given $n$ variables and the length of Sch\"oning's walk $m$, the quantum implementation requires $n + m \log3$ qubits to encode the initializations and walk randomness. 
The oracles of the partial Groverization schemes are some adaptation of one or more Sch\"oning walks, and regardless of the search space they act on, the label of the violated clause at each step needs to be stored in their workspaces. This is necessary since such oracles are typically realized using uncomputation, therefore, $\log L$ extra auxiliary qubits are needed at each step, amounting to $m\log L$ qubits in total for the workspace. As a result, encoding any Groverization of Sch\"oning's algorithm asymptotically needs $n+(\log3+\log L)m$ qubits. 

Figure \ref{fig:general_circuit} represents a single step of Sch\"oning walk, schematically. The first register encodes the space of all possible initialization. The gates $\text{ev}_j$, for $j \in\{1,..,L\}$, act on the first two registers. Each gate consists of a few controlled-gates where the control qubits correspond to the three variables in the $j$-th clause, and the target qubit is the second register. The second register is an auxiliary qubit, initially set to $\ket{0}$, and is negated as soon as the first violated clause is detected. The third register consists of $\log L$ auxiliary qubits that are used to count the number of clauses from where the first violated clause has happened. The last register is a qutrit providing the randomness of the corresponding walk step. The controlled-gates $\text{ch}_j$, for $j \in\{1,..,L\}$ act on the first three registers, and take care of variable flipping wherever the first violated clause is detected. The $0\lor1\lor2$ block represents a triple controlled-gate where the control qutrit is the subspaces corresponding to the computational basis states $\ket{0},\ket{1},\ket{2}$. 
Figure \ref{fig:GI_circ} depicts the controlled-gates including $\text{ch}_j$, in detail. The sub-figure on the right shows the corresponding controlled-gate for GI, where the walk randomness is fed classically to the last register.

All partial Groverization of Sch\"oning algorithm can be implemented using slight modifications.
For the GW algorithm, the $n$-qubit variable register will not be initialized in the uniform superposition of all possible assignments $\ket{+}^{\otimes n}$, but rather in a state with classically randomly defined variables $\ket{x_1 \cdots x_n}$.
For the GI algorithm the qutrit within every Sch\"oning's step can be removed since we can, for every Sch\"oning's step, generate a random number $r\in\{0, 1, 2\}$ and apply only the $X$ gates based on the classically determined $r$ (see figure \ref{fig:GI_circ}).

\begin{figure}[t]
    \begin{center}
        \begin{minipage}{\textwidth}
            \resizebox{\textwidth}{!}{%
\begin{quantikz}
\lstick{$\ket{+}^{\otimes n}$} &
    [3mm] \qwbundle{n} & 
    \gate[2]{\text{ev}_1}  & 
    \gate{\text{ch}_1} & 
    \qw & 
    \gate[2]{\text{ev}_2} & 
    \gate{\text{ch}_2} & 
    \qw & 
    \qw & 
    &
    &
    \gate[2]{\text{ev}_L} & 
    \gate{\text{ch}_L} & 
    \qw \\ 
\lstick{$\ket{0}$} & 
    \qw & 
    \qw & 
    \ctrl{-1} & 
    \ctrl{1} & 
    \qw & 
    \ctrl{-1} & 
    \ctrl{1} & 
    \qw & 
    \cdots &
    & 
    \qw & 
    \ctrl{-1} & 
    \qw \\ 
\lstick{$\ket{0}^{\otimes \log L}$} & 
    \qwbundle{\log L} & 
    \qw & 
    \octrl{-1} & 
    \gate{+1} & 
    \octrl{-1} & 
    \octrl{-1} & 
    \gate{+1} & 
    \qw & 
    &
    &
    \octrl{-1} & 
    \octrl{-1} & 
    \qw \\ 
\lstick{
}  & 
    \qw & 
    \qw & 
    \gate{0\lor1\lor2}\vqw{-1} & 
    \qw & 
    \qw & 
    \gate{0\lor1\lor2}\vqw{-1} & 
    \qw & 
    \qw & 
    &
    &
    \qw & 
    \gate{0\lor1\lor2}\vqw{-1} & 
    \qw\\ 
\end{quantikz}

            }
        \end{minipage}
    \end{center}
    \caption{\label{fig:general_circuit} The quantum implementation of a single Sch\"oning's step for a general implementation of the partial Groverization of
 Sch\"oning's algorithm. The $\text{ev}_j$ gates evaluate the $j$-th clause on the corresponding variables and the controlled-gates containing $\text{ch}_i$ and  $0\lor1\lor2$ act on all the registers and check if the $j$-th clause is the first violated clause and if so, flip one of three variables in it based on the randomness provided by the if-statement, $0\lor1\lor2$. Here $0\lor1\lor2$ represents a triple controlled-gate where the control qutrit is the subspaces of the computational basis (visualized in figure \ref{fig:GI_circ}). The $\log L$ auxiliary qubits are needed for uncomputation.}
\end{figure}
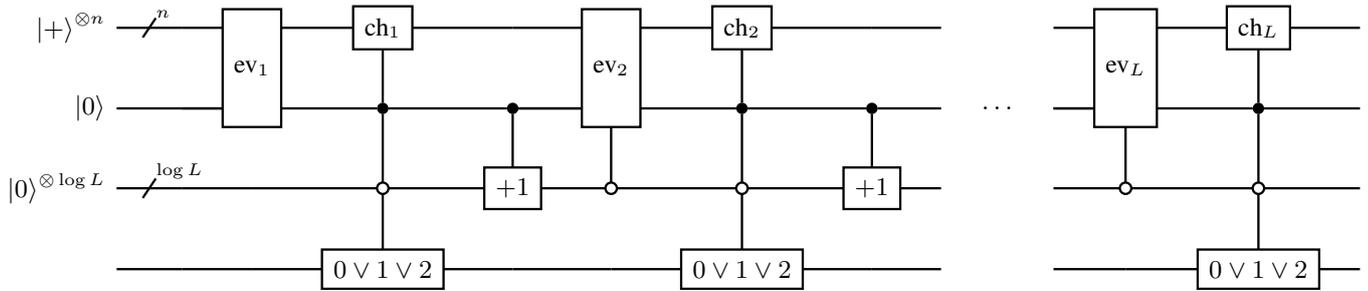
   
\begin{figure}[t]
\begin{minipage}{\textwidth}
        \resizebox{\textwidth}{!}{%
       \begin{quantikz}
\lstick{$\ket{x_{i_1}}$}    & \gate[3]{\text{ch}_j} & \qw &\\ [3.3mm]
\lstick{$\ket{x_{i_2}}$}    & \qw          & \qw & \\ [3.3mm]
\lstick{$\ket{x_{i_3}}$}    & \qw          & \qw & \\
\lstick{$\ket{\chi}$}   & \ctrl{-1}    & \qw & \\
\lstick{$\ket{\phi}$} 
                        & \octrl{-3}   &\qw & \\
\lstick{
}   
                        & \gate{0\lor1\lor2}\vqw{-1} &\qw \\
\end{quantikz}
=
\begin{quantikz} 
\lstick{$\ket{x_{i_1}}$}    & \gate{X} & \qw & \qw & \qw&\\
\lstick{$\ket{x_{i_2}}$}    & \qw & \gate{X} & \qw & \qw&\\
\lstick{$\ket{x_{i_3}}$}    & \qw & \qw & \gate{X} & \qw&\\
\lstick{$\ket{\chi}$}   & \ctrl{-3} & \ctrl{-2} & \ctrl{-1} & \qw &\\
\lstick{$\ket{\phi}$} 
                        & \octrl{-1} & \octrl{-1} & \octrl{-1}  &\qw & \\
\lstick{
}  
                        & \gate{0}\vqw{-1} & \gate{1}\vqw{-1} & \gate{2}\vqw{-1} & \qw & \\
\end{quantikz} $\xRightarrow{\text{GI}}$ \begin{quantikz}
\lstick{$\ket{x_{i_1}}$}    & \gate{X} & \qw & \qw & \qw\\
\lstick{$\ket{x_{i_2}}$}    & \qw & \gate{X} & \qw & \qw\\
\lstick{$\ket{x_{i_3}}$}    & \qw & \qw & \gate{X} & \qw\\
\lstick{$\ket{\chi}$}   & \ctrl{-3} & \ctrl{-2} & \ctrl{-1} & \qw&\\
\lstick{$\ket{\phi}$} 
                        & \octrl{-3} & \octrl{-2} &\octrl{-1}  &\qw\\
\lstick{$r$}            & \gate{0}\vcw{-1} & \gate{1}\vcw{-1} & \gate{2}\vcw{-1}&\qw\\[-.66cm]
                        \lstick{} &\qw & \qw & \qw & \qw &\\
\end{quantikz}
        }
\end{minipage}
\caption{\label{fig:GI_circ} Implementation of the variable flips of Sch\"oning's walk within amplitude amplification.
Here, $x_{i_1}$, $x_{i_2}$ and $x_{i_3}$ are the variables of the $j$-th clause.  The $0\lor1\lor2$ block represents a triple controlled-gate where the control qutrit is the  subspaces of the basis $\ket{0},\ket{1},\ket{2}$. 
For the GI algorithm, the walk randomness can be provided by fixing a random number $r \in \{0,1,2\}$ for every walk step. }
\end{figure}
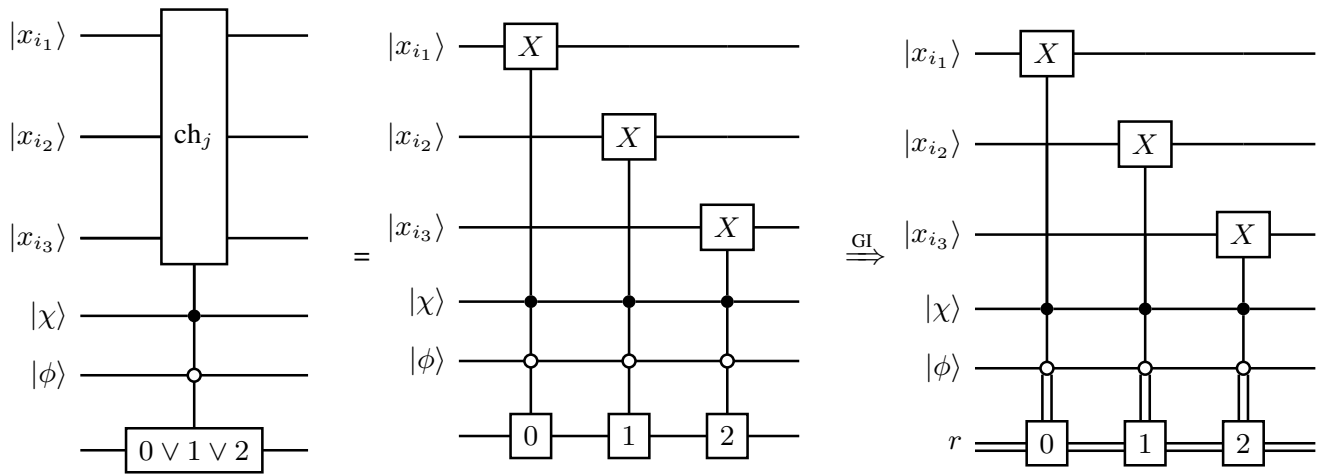

\section{Summary \& Outlook}

This work considers hybrid schemes for search-based quantum algorithms, with the aim to allow for parallelizability, and to reduce the need for long coherence times. The basic gist is to partition the randomness of an underlying classical probabilistic algorithm into a part that is subject to Grover search, while the rest is sampled classically. Such `partial Groverizations' allow for parallelization of the classical sampling, as well as enable adaption to available coherence times. We consider exponential-time algorithms, why our analysis focuses on the asymptotic run-time rates and coherence-time rates. We argue that these two types of rates are bounded by a general trade-off relation that no hybrid-scheme can beat. For our concrete analysis, we consider hybrid schemes based on  Sch\"oning's algorithm, where the latter solves $3$-SAT (or more generally $k$-SAT) problems by random walks in the space of assignments. The walk-procedure allows for several partial Groverization-schemes.  We determine the corresponding run-times and coherence-times of these schemes, and demonstrate saturation of the general trade-off relation.  Many of these partial Groverizations intuitively lend themselves for efficient circuit implementations, and we provide the main building blocks of these. On a more speculative note, we present numerical evidence that the GI scheme can be partially de-randomized, in the sense that a single `typical' instance of the classical randomness of the walk appears to mimic the effects of the repeated sampling. This would open for an additional flexibility in the implementation of these hybrid-schemes, still maintaining the optional trade-off.

In this investigation, we have focused on partial Groverizations of  Sch\"oning's algorithm. However, this approach should in principle be applicable to any classical probabilistic search scheme, since it essentially only rests on partitions of the underlying randomness. The main concern would be to find `natural' partitions that are algorithmically accessible, in the sense that the partial Groverization can be implemented efficiently.  Explicit run-time and coherence-time rates would also require a classical scheme, as well as partitions, that are sufficiently tractable for analysis, unless one would resort to numerical estimates.

The partial de-randomization of GI-scheme that is suggested by our numerical explorations, would deserve further investigations.    
In particular, the question is to what extent, and in what sense,  the hypothetical relation \eqref{eqn:w_doesnt_matter} would be true. Moreover, one may ask if something similar also would apply to fractional GI. For numerical investigations, it would be relevant to extend to larger Hamming distances, further classes of $3$-SAT  instances, as well as problem sizes. This would likely involve challenges to design reliable numerical estimates, since exact calculations by the very nature of the problem quickly becomes intractable. For purely analytical approaches, some notion of concentration of measure of walks, would be interesting.

In the spirit of \cite{Schoening99, SchoeningToranBook} we have in this investigation employed  `the walk on $\mathbb{Z}$' as a model of the true Sch\"oning-procedure. In Appendix (see also \cite{SwissPhDThesis}) we in additionally provide bounds for the true rates of Sch\"oning-procedure and the GW-procedure, in terms of the mirroring processes on $\mathbb{Z}$. It would be relevant to obtain similar bounds also for the GI-process, as well as for the various fractional schemes.

\section{Acknowledgement}

We thank Phillip Keldenich for kindly providing us with the SAT instance that was used for numerical simulations resulted in figure (\ref{fig:runtime_1414}). 
We acknowledge support from Bundesministerium für Bildung und Forschung - BMBF under project QuBRA.
The UzK team was also supported by Germany’s Excellence Strategy – Cluster of Excellence Matter and Light for Quantum Computing (ML4Q) EXC 2004/1 (390534769).



\newpage
\appendix

\section{\label{AppFromSchToZ}From the true Sch\"oning-process  to the Markov process on $\mathbb{Z}$}

\subsection{The purpose of this appendix}

For the calculation of rates, we replace the genuine searches of solutions for 3SAT-problems (the `true  Sch\"oning process'), with a Markovian random walk on  the `Hamming distance' (although we strictly speaking consider a walk on $\mathbb{Z}$). This is analogous to Sch\"onings analysis of the performance of Sch\"onings algorithm \cite{Schoening99,SchoeningToranBook}, where it is argued that this substitute-process yields an upper bound on the rates of the run-time of the algorithm. The purpose of this appendix is to give a more detailed argument for why the success probability of Sch\"onings algorithm is lower bounded by the success-probability of the substitute walk on $\mathbb{Z}$. The reader may also wish to consult \cite{SwissPhDThesis} for a previous analysis along these lines. Apart from obtaining from bounding the success probability for the true Sch\"oning-process, we also provide the analogous bound for the GW process.


\subsection{The Sch\"oning-process}

As described in the main text, the 3SAT problem consists of  a collection of clauses $C_1,\ldots, C_L$ on $n$ binary variables, where each clause is of the form $C_j = l_0^{(j)}\vee l_1^{(j)}\vee l_2^{(j)}$, and where each of the literals $l_0^{(j)}$,  $l_1^{(j)}$, $l_2^{(j)}$ is one of the binary variables, or its negation.   The  3SAT formula is the conjunction of all the given clauses, $C := \wedge_{j=1}^{L}C_j$, and the task is to determine whether there exists an assignment $x\in \{0,1\}^{\times n}$ of the $n$ binary variables, which satisfies $C$.  In the following analysis we assume that $C$ either has a \emph{unique} satisfying assignment $x^{\star}\in \{0,1\}^{\times n}$, or alternatively, that $x^{\star}$ is \emph{selected} among a set of solutions.

 Sch\"onings procedure can be regarded as a stochastic process $(x_l)_{l=0}^{m}$ with  $x_l \in\{0,1\}^{\times n}$. 
The process is initialized by a random assignment $x_0$ of the bit string, drawn uniformly over all of $\{0,1\}^{\times n}$. On this state it checks all the clauses $C_1,\ldots,C_L$ (according to a pre-determined order). If all are satisfied, then the initial string satisfies $C$ and the algorithm terminates. 
 Otherwise, it finds the first unsatisfied clause, and randomly negates one of the three variables corresponding to the literals of that clause. The algorithm continues according to this random walk until it either finds a satisfying assignment, or it reaches  a  pre-determined termination-time $K$.  For our purposes, it is convenient to think of the state $x_l$ of the process as a function of a collection of random variables. The initialization is represented by the random variable $A$, which takes values in $\{0,1\}^{\times n}$. The randomness in the walk is captured by the variables  $B = (B_1,\ldots, B_m)$ be random variables where each $B_l$ takes values in $\{0,1,2\}$ (and thus $B$ takes values in $\{0,1,2\}^{\times m}$). Hence, $B_l$ represents one of the three possible choices of which literal to flip at step $l$. 
   We assume that $A,B_1,\cdots, B_m$ are independent and uniformly distributed, i.e., for $b = (b_1,\ldots, b_m)$ we have
\begin{equation}
\label{dfabadfbadfn}
\begin{split}
P(A= a,B=b) = & P(A=a)P(B=b) = P(A = a)P(B_1 = b_1)\cdots P(B_m = b_m),\\
P(A = a) = & \frac{1}{2^n},\quad \forall a\in\{0,1\}^{\times n}\\
 P(B_l=b_l) = & \frac{1}{3},\quad \forall b_l\in{0,1,2}.
\end{split}
\end{equation}
Hence, we can write the Sch\"oning process as $(x_l)_{l} = (x_l(A,B))_{l}$, where
\begin{equation}
x_0(a,b) := a,
\end{equation}
i.e., $a$ the initial state. At the  l:th step of Sch\"oning's process is based on the state $x_{l-1}$ of the previous step. On this state, all the clauses $C_1,\ldots,C_L$ (according to a pre-determined order) are checked. If all are satisfied, then $x_{l-1} = x^{\star}$ and the process remains in that state, i.e., $x_l = x^{\star}$.
(In other words, t $x^{\star}$ is an absorbing state for the Sch\"oning-process.)
 Otherwise, it finds the first unsatisfied clause, which we refer to as $C_{j_l}$. The selected clause, $C_{j_l}$, contains the three literals $(l_0^{(j)},l_1^{(j)},l_2^{(j)})$. The process constructs $x_l$ by negating the variable corresponding to literal $l^{(j)}_{b_l}$. In other words, it is the l:th component of $b$ that determines which of these three choices that is selected. One may note that the process, by construction, satisfies
\begin{equation}
x_l(a,b) = x_l(a,b_1,\ldots,b_{l}).
\end{equation}
Hence, the value of $x_l(a,b)$ only depends on the values of $b_1,\ldots, b_l$, not any of the `later' variables $b_{l+1},b_{l+1},\ldots$.
One may also note that $A,B_1,B_2,\ldots, B_K$ encompasses \emph{all} the randomness in the process. In other words, the state $x_l$ is uniquely determined by $a,b_1,\ldots,b_{l}$.


\subsection{The proof idea}

As described above, the true Sch\"oning-process $(x_l)_l$ is a walk on bit-strings. However, for the analysis of the optimal rates, we follow in the steps of  Sch\"oning \cite{Schoening99,SchoeningToranBook}, and instead focus on the Hamming-distance to the (selected) solution $x^{\star}$. In principle, nothing prevents us from projecting the state $x_l$ of the  Sch\"oning-process, to the Hamming distance $d_H(x_l,x^{\star})$ (i.e. projecting onto $\mathbb{N}$). However, this would generally yield a  process that would be no easier to analyze than the original Sch\"oning-process. One may for example note that although Sch\"onings-process  $(x_l)_l$ is Markovian on the space of bit-strings, one cannot generally expect its projection $\big(d_H(x_l,x^{\star})\big)_l$ to be Markovian on $\mathbb{N}$. The general idea for the analysis is to replace (via  a coupling) the true projection   $\big(d_H(x_l,x^{\star})\big)_l$ with another process $(\tilde{d}_l)_l$ on $\mathbb{N}$, which is Markovian and which moreover upper-bounds the true Hamming-distance, $d_H(x_l,x^{\star})\leq \tilde{d}_l$. One may note that the Sch\"oning-process is `successful' if it finds the solution $x^{\star}$. Hence,  we can express the success probability at step $l$ as $P(x_l = x^{\star}) = P\big(d_H(x_l,x^{\star}) = 0\big)$. From the bound $d_H(x_l,x^{\star})\leq \tilde{d}_l$ it follows that 
$P(x_l = x^{\star}) \geq  P(\tilde{d}_l = 0)$. In other words, the success-probability of the Sch\"oning-process is lower-bounded by the probability that the substitute-process $\tilde{d}_l$ reaches $0$. The fact that $(\tilde{d}_l)_l$ is Markovian makes the analysis more tractable. However, the value $0$ corresponds to an absorbing boundary. (If we find the solution at an earlier stage, we should terminate the process rather than walking on.) To further ease the analysis, we remove this boundary and instead introduce yet another walk $(d_l)_l$ on $\mathbb{Z}$, which we regard as `successful' whenever $d_l\leq 0$. For this process we moreover establish the bound $P(\tilde{d}_l = 0)\geq P(d_l\leq 0)$, and thus $P(x_l = x^{\star})\geq P(d_l\leq 0)$. By the trivial bound $P(d_l\leq 0)\geq P(d_l = 0)$, we thus ultimately get the bound  $P(x_l = x^{\star})\geq P(d_l =  0)$. For the calculation of the optimal rates, our starting point is an expression for $P(d_l =  0)$. By the inequality $P(x_l = x^{\star})\geq P(d_l =  0)$ it follows that the calculated rates are upper bounds to the true rates of the Sch\"oning-process.


\subsection{Constructing a walk $(\tilde{d}_l)_l$ on  $\mathbb{N}$ such that $d_H(x_l,x^{\star})\leq \tilde{d}_l$}

Related to the Sch\"oning-process $(x_l)_l$, we here wish to construct  another process $(\tilde{d}_l)_l$, where $\tilde{d}_l$ takes values in $\mathbb{N}$ for all $l\in\mathbb{N}$, and is such that 
   \begin{equation}
d_H\big(x_l(a,b_{1},\ldots, b_{l}),x^{\star}\big) \leq \tilde{d}_l(a,b_{1},\ldots, b_{l}),\quad \forall a\in\{0,1\},\quad \forall b\in\{0,1,2\}^{\times m},\quad l = 0,1,2,\ldots,m.
\end{equation}
In other words, we want to make sure that  $\tilde{d}_l$ \emph{always} is an upper bound to the Hamming distance between $x_l$ and $x^{\star}$.
This requires a considerable coordination between the two processes. In particular, whenever $x_l$ moves in the `wrong' direction (i.e, increases the Hamming distance to $x^{\star}$) then $\tilde{d}_l$ also has to increase. 
To this end, we consider the list of clauses $C_{1},\ldots, C_L$. For each clause $C_j$  it is the case that $C_j(x^{\star}) = 1$. Hence, for each $j$, at least one of the literals  $l_0^{(j)},l_1^{(j)},l_2^{(j)}$ is satisfied by $x^{\star}$. Among these satisfied clauses we select one of these satisfied literals, and let $r_j\in\{0,1,2\}$ be its index. In other words, we are guaranteed that $l_j^{(r_j)}(x^{\star}) = 1$.

As already described above, the Sch\"oning-process $(x_l(a,b))_l$ is uniquely determined by $(a,b_1,\ldots,b_l)$, and does in turn uniquely determines the unsatisfied clauses $C_{j_l}$, as long as $x_l(a,b_1,\ldots, b_l)\neq x^{\star}$. Consequently, it also uniquely determines a sequence of `selected' literals $r_{j_l}$, whenever $x_l(a,b_1,\ldots, b_l)\neq x^{\star}$.
For each $l \in 1,2,\ldots $, we define a mapping $(a,b_1,\ldots b_{l-1})\mapsto f_l(a,b_1,\ldots,b_{l-1})\in\{0,1,2\}$ by
\begin{equation}
	\label{sfgnsfgngf}
\begin{split}
f_1(a)    := & \left\{\begin{matrix}
 0 & \mathrm{if} &  x_0 \equiv a = x^{\star},\\
 r_{j_1} & \mathrm{if} &  x_{0}\equiv a \neq x^{\star}.
\end{matrix}\right.\\
f_l(a,b_1,\ldots,b_{l-1}) := & \left\{\begin{matrix}
 0 & \mathrm{if} &  x_{l-1}(a,b_1,\ldots, b_{l-1}) = x^{\star},\\
 r_{j_l} & \mathrm{if} &  x_{l-1}(a,b_1,\ldots, b_{l-1}) \neq x^{\star}.
\end{matrix}\right.\quad l = 2,3,\ldots
\end{split} 
\end{equation}
The purpose of $f_l(a,b_1,\ldots,b_{l-1})$ is to determine which value of $b_l$ that should correspond to a `successful' move for the $(\tilde{d}_l)_l$-process.
More precisely, we define $(\tilde{d}_l(a,b))_{l}$ by
\begin{equation}
\label{fgnsfgnfg}
\begin{split}
\tilde{d}_0(a,b) := &  d_H\big(x_0(a,b),x^{\star}\big) = d_H(a,x^{\star}),\\
\tilde{d}_{l}(a,b_{1},\ldots, b_{l}) := & \left\{\begin{matrix}
0 & \mathrm{if} & \tilde{d}_{l-1}(a,b_{1},\ldots,b_{l-1}) = 0\\
\tilde{d}_{l-1}(a,b_{1},\ldots, b_{l-1}) + 1 & \mathrm{if} & \tilde{d}_{l-1}(a,b_{1},\ldots, b_{l-1}) \neq 0,\quad  b_l \neq f_l(a,b_{1},\ldots, b_{l-1})\\
\tilde{d}_{l-1}(a,b_{1},\ldots, b_{l-1}) - 1 & \mathrm{if} &  \tilde{d}_{l-1}(a,b_{1},\ldots, b_{l-1}) \neq 0,\quad b_l  = f_l(a,b_{1},\ldots,b_{l-1})\\
\end{matrix}\right. \quad l = 1,2,\ldots
\end{split}
\end{equation}
In words, the first condition in the bracket means that $0$ is an absorbing state, i.e., if $\tilde{d}_l(a,b) = 0$ for some $l$, then $\tilde{d}_{l'}(a,b) = 0$ for all $l'\geq l$. The other two cases make sure that the $\tilde{d}_l$ moves in `coordination' with the  Sch\"oning-process $(x_l(a,b))_l$, in such a manner that it is guaranteed that $d_H\big(x_l(a,b_{1},\ldots,b_{l}),x^{\star}\big)$ does not increase above $\tilde{d}_l(a,b_{1},\ldots,b_{l})$.

\begin{lemma}
\label{bound}
The Sch\"oning process $(x_l)_{l\in\mathbb{N}}$ and the process $(\tilde{d}_l)_{l\in \mathbb{N}}$ as defined by  (\ref{sfgnsfgngf}) and (\ref{fgnsfgnfg}) satisfy 
  \begin{equation}
  \label{adfbdfn}
d_H\big(x_l(a,b_{1},\ldots,b_{l}),x^{\star}\big) \leq \tilde{d}_l(a,b_{1},\ldots,b_{l}),\quad \forall a\in\{0,1\}^{\times n},\quad \forall b\in\{0,1,2\}^{\times l},\quad l = 0,1,2,\ldots.
\end{equation}
\end{lemma}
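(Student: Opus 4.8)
The plan is to prove \eqref{adfbdfn} by induction on $l$, following exactly the coupling set up in \eqref{sfgnsfgngf}--\eqref{fgnsfgnfg} and checking that it behaves as designed. The base case $l=0$ is immediate: the definition $\tilde{d}_0(a,b)=d_H\bigl(x_0(a,b),x^{\star}\bigr)$ makes \eqref{adfbdfn} an equality.

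For the inductive step, assume $d_H\bigl(x_{l-1}(a,b_1,\dots,b_{l-1}),x^{\star}\bigr)\le \tilde{d}_{l-1}(a,b_1,\dots,b_{l-1})$ and split into the branches of \eqref{fgnsfgnfg}. If $\tilde{d}_{l-1}=0$, the hypothesis forces $x_{l-1}=x^{\star}$; since $x^{\star}$ is absorbing for the Sch\"oning process, $x_l=x^{\star}$, and both sides of \eqref{adfbdfn} equal $0$. If $\tilde{d}_{l-1}\neq 0$ but still $x_{l-1}=x^{\star}$, then again $x_l=x^{\star}$ by absorption, so the left side is $0$ while the right side is $\tilde{d}_{l-1}\pm 1\ge 0$; here $f_l=0$ by \eqref{sfgnsfgngf}, so either of the last two branches of \eqref{fgnsfgnfg} may be taken, but both are harmless. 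The only case with genuine content is $\tilde{d}_{l-1}\neq 0$ and $x_{l-1}\neq x^{\star}$, where the process flips the variable underlying the literal $l_{b_l}^{(j_l)}$ of its first violated clause $C_{j_l}$, and $f_l=r_{j_l}$.

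The key observation is that the ``selected'' index $r_{j_l}$ always points to a variable on which $x_{l-1}$ and $x^{\star}$ disagree: by construction $l_{r_{j_l}}^{(j_l)}(x^{\star})=1$, whereas $C_{j_l}$ being violated by $x_{l-1}$ means all three of its literals are falsified by $x_{l-1}$, in particular $l_{r_{j_l}}^{(j_l)}(x_{l-1})=0$; a literal with opposite truth values under the two assignments has an underlying variable with opposite values. Hence, if $b_l=f_l=r_{j_l}$, the single flip decreases the Hamming distance by exactly one, and since $x_{l-1}\neq x^{\star}$ gives $d_H(x_{l-1},x^{\star})\ge 1$ we obtain $d_H(x_l,x^{\star})=d_H(x_{l-1},x^{\star})-1\le \tilde{d}_{l-1}-1=\tilde{d}_l$, the last quantity being $\ge 0$ because $\tilde{d}_{l-1}\ge 1$. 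If instead $b_l\neq f_l$, flipping any one variable changes the Hamming distance by at most one, so $d_H(x_l,x^{\star})\le d_H(x_{l-1},x^{\star})+1\le \tilde{d}_{l-1}+1=\tilde{d}_l$. This exhausts the cases and closes the induction.

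I expect no real obstacle: the whole argument is bookkeeping over the three branches of \eqref{fgnsfgnfg}, the single substantive ingredient being the observation above, which follows directly from the definition of $r_{j_l}$ and of ``first violated clause''. The one spot that needs slight care is the degenerate sub-case $x_{l-1}=x^{\star}$ with $\tilde{d}_{l-1}\neq 0$: there the $\tilde{d}$-walk has not yet reached its absorbing state although the Sch\"oning walk has, so the bound must be argued via absorption of $x^{\star}$ rather than via the flip performed at step $l$.
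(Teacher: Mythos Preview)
Your proposal is correct and follows essentially the same approach as the paper: induction on $l$, with the base case immediate and the inductive step handled by a case analysis whose only substantive ingredient is that the selected literal index $r_{j_l}$ always corresponds to a variable on which $x_{l-1}$ and $x^{\star}$ disagree. The only cosmetic difference is the order of the case split (you branch first on $\tilde{d}_{l-1}=0$, the paper first on $x_{l-1}=x^{\star}$), which is immaterial.
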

One may note that (\ref{adfbdfn}) holds for every single element in the event-space, and Lemma \ref{bound} does thus not depend on the actual probability distribution of $A,B_1,\ldots, B_l$. However, there are other steps in our proofs that do depend crucially on these variables being independent and uniformly distributed.

\begin{proof}
We first note that 
\begin{equation}
\begin{split}
x_0(a,b) = a,\quad \tilde{d}_0 = d_H(a,x^{\star})
\end{split}
\end{equation}
and thus (\ref{adfbdfn}) is satisfied for $l=0$ for all $a,b$.

Now, assume that (\ref{adfbdfn}) holds for some $l-1,a,b$.
We have the following cases:
\begin{itemize}
\item {\bf Case $x_{l-1}(a,b) = x^{\star}$:}  Since we assume that $x^{\star}$ is absorbing, it follows that $x_{l}(a,b) = x^{\star}$ and consequently $d_H\big(x_l(a,b_{1},\ldots,b_{l}),x^{\star}\big) = 0$.  Concerning $\tilde{d}_{l-1}$, we can distinguish yet two sub-cases:
\begin{itemize}
\item {\bf Case $\tilde{d}_{l-1}(a,b_1,\ldots,b_{l-1}) = 0$:} By construction (first case in (\ref{fgnsfgnfg})) $\tilde{d}_{l}(a,b_1,\ldots,b_{l}) = 0$, and (\ref{adfbdfn}) is thus satisfied for $l,a,b$.
\item {\bf Case $\tilde{d}_{l-1}(a,b_1,\ldots,b_{l-1}) \neq 0$:} Then $\tilde{d}_{l-1}(a,b_1,\ldots,b_{l-1}) \geq 1$. Since the process $d$ can change at most one step, it follows that $\tilde{d}_{l-1}(a,b_1,\ldots,b_{l-1}) \geq 0$, and thus (\ref{adfbdfn}) is satisfied for $l,a,b$.
\end{itemize}
\item {\bf  Case $x_{l-1}(a,b) \neq x^{\star}$:}
Since we assume that (\ref{adfbdfn}) holds for $l-1,a,b$ it follows that  $\tilde{d}_{l-1}(a,b_1,\ldots,b_{l-1}) \geq 1$. Moreover, since $x_{l-1}(a,b) \neq x^{\star}$, we have $f_l(a,b_1,\ldots,b_{l-1}) = r_{j_l}$. Again, we can distinguish two sub-cases:
\begin{itemize}
\item {\bf Case $f_l(a,b_{1},\ldots,b_{l-1}) = b_l$:} In this case, the $d$-process decreases one step. However, by construction $r_{j_l}$ is one of the `successful' flips for the Sch\"oning-process, hence the $x$-process also decreases one step.  By assumption the inequality  (\ref{adfbdfn}) is satisfied for  $l-1,a,b$, and since both the $x$-process and the $d$-process decrease one step, (\ref{adfbdfn}) remains satisfied for $l,a,b$.
\item {\bf Case $f_l(a,b_{1},\ldots,b_{l-1})\neq b_l$:}
In this case, the $d$-process increases one step. The $x$-process may increase or decrease, but with at most one step, so (\ref{adfbdfn}) remains satisfied for $l,a,b$.
\end{itemize}
\end{itemize}
By induction, we can conclude that (\ref{adfbdfn}) is satisfied for all $l,a,b$.
\end{proof}


\subsection{$(\tilde{d}_l)_l$ is a Markov chain}

In the following we wish to show that $(\tilde{d}_l)_l$ is a Markov chain. Recall that  both the genuine Sch\"oning-process  $(x_l)_l$, as well as the walk $(\tilde{d}_l)_l$, are determined by a sequence of  `walk variables' $(B_l)_l$ (and initial-state variable $A$). The  Sch\"oning-walk itself is Markovian, but it is \emph{a priori} not  obvious that the process $(\tilde{d}_l)_l$ also is Markovian, particularly since the $l$-th step of the latter is determined by a complicated function  of all the walk-variables up to the $l$-th step, as described  by  (\ref{fgnsfgnfg}). However, in spite appearances, it turns out that (\ref{fgnsfgnfg}) defines a mapping from the original set of random variables $(B_l)_l$ to a new set of variables  $(\tilde{B}_l)_l$,  in such a manner that  the \emph{change} from $\tilde{d}_{m-1}$ to $\tilde{d}_m$ is determined by $\tilde{B}_m$, and \emph{only} by  $\tilde{B}_m$.  Moreover, it turns out that $(\tilde{B}_l)_l$ is an iid sequence. 
Since all the $\tilde{B}_l$ are independent, it follows that $(\tilde{d}_l)_l$ must be a Markov chain.  In order to show that the new sequence of variables  $(\tilde{B}_l)_l$ is iid, what we actually do is to show that  (\ref{fgnsfgnfg}) induces a bijection on $\{0,1\}^{\times n}\times\{0,1,2\}^{l}$. Since $\big(A, (B_l)_l\big)$ is uniformly distributed (see (\ref{dfabadfbadfn})) it follows by the bijection that  $\big(A, (\tilde{B}_l)_l\big)$ also is uniformly distributed, and thus in particular that $(\tilde{B}_l)_l$ is iid.

\subsection{A bijection}

The following lemma introduces functions $f_s$.  Later, in the proof of Proposition \ref{adfbdfsgnsrgnr}, we will let these mappings be  the functions $f_l(a,b_1,\ldots,b_{l-1})$ in (\ref{sfgnsfgngf}). Since the latter  are algorithmically defined, via the  Sch\"oning-process $(x_l)_l$, it is challenging to get a hold on the properties of these mappings. It is thus worth noting that (apart from domains and ranges) Lemma  \ref{xfgrqreqghr} (and Lemma \ref{mnuzmueeumt}) makes no assumptions on the properties of the mappings $f_s$. Hence, our lack of control over the mappings $f_l(a,b_1,\ldots,b_{l-1})$ will not be an issue in the subsequent proofs.

As preparation, we make the following observations.
\begin{lemma} 
If $t,t',r\in\{0,1,2\}$, then 
\begin{equation}
\label{sfgfsgm}
(t-r)\mathrm{mod}\, 3 = (t'-r)\mathrm{mod}\, 3\quad\Leftrightarrow\quad t = t'.
\end{equation}
Moreover, if $t,r\in\{0,1,2\}$, then
\begin{equation}
\label{sbdfbsf}
\big((t+r)\mathrm{mod}\,3 - r\big)\mathrm{mod}\, 3 = t.
\end{equation}
\end{lemma}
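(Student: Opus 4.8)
The plan is to reduce both claims to the single elementary fact that $\{0,1,2\}$ is a complete residue system modulo $3$, so that two of its elements are congruent mod $3$ precisely when they are equal; everything else is bookkeeping. Concretely, I would first record this observation as the backbone of the argument, and then apply it twice.

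For the equivalence \eqref{sfgfsgm}, the direction ``$\Leftarrow$'' is trivial: if $t=t'$ then the two residues agree term by term. For ``$\Rightarrow$'', I would observe that $(t-r)\bmod 3=(t'-r)\bmod 3$ forces $3$ to divide the difference $(t-r)-(t'-r)=t-t'$; since $t,t'\in\{0,1,2\}$ we have $|t-t'|\le 2<3$, and the only multiple of $3$ in that range is $0$, so $t=t'$. For the identity \eqref{sbdfbsf}, I would use that $(t+r)\bmod 3\equiv t+r\pmod 3$, hence $\big((t+r)\bmod 3\big)-r\equiv t\pmod 3$, and therefore $\big(\big((t+r)\bmod 3\big)-r\big)\bmod 3=t\bmod 3$; since $t\in\{0,1,2\}$ the right-hand side is just $t$.

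There is no genuine obstacle here; the only point deserving a moment's care is that the hypothesis $t,t'\in\{0,1,2\}$ is exactly what upgrades a congruence modulo $3$ to an honest equality, and without it the first claim would fail. An alternative route is a brute-force check over the at most nine cases for each statement, but the residue-system argument is cleaner and generalizes immediately to the modulus being any integer at least as large as the range of the variables.
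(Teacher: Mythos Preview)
Your argument is correct. The paper actually states this lemma as a preparatory observation and does not supply a proof at all, so there is nothing to compare against; your residue-system argument is a clean way to fill the gap the paper leaves implicit.
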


\begin{lemma}
\label{xfgrqreqghr}
Let $f_1:\{0,1\}^{\times n}\rightarrow \{0,1,2\}$ and 
 $f_s:\{0,1\}^{\times n}\times \{0,1,2\}^{\times (s-1)}\rightarrow \{0,1,2\}$ for $s = 2,\ldots, l$ be given. 
Define the  mapping $\{0,1\}^{\times n}\times \{0,1,2\}^{\times l}\ni (b_1,\ldots, b_l) \mapsto Q(a,b_1,\ldots, b_l) = (\tilde{a},\tilde{b}_1,\ldots, \tilde{b}_l)\in \{0,1\}^{\times n}\times\{0,1,2\}^{\times l}$ by
\begin{equation}
\label{sdvabaef}
\begin{split}
\tilde{a} := & a,\\
\tilde{b}_1 := &  \big(b_1-f_1(a)\big)\mathrm{mod}\, 3,\\
\tilde{b}_2 := & \big(b_2-f_2(a,b_1)\big)\mathrm{mod}\, 3,\\
\tilde{b}_3 := & \big( b_3-f_3(a,b_1,b_2)\big)\mathrm{mod}\, 3,\\
\vdots & \\
\tilde{b}_l := & \big( b_l-f_l(a,b_1,\ldots,b_{l-1})\big)\mathrm{mod}\, 3.
\end{split}
\end{equation}
Then $Q$ is a  bijection on $\{0,1\}^{\times n}\times\{0,1,2\}^{\times l}$.
\end{lemma}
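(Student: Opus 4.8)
The plan is to prove that $Q$ is a bijection on the finite set $\{0,1\}^{\times n}\times\{0,1,2\}^{\times l}$ by exhibiting an explicit inverse, built layer by layer. Since domain and codomain are the same finite set, it would in fact suffice to show $Q$ is injective (or surjective), but constructing the inverse directly is cleanest and also foreshadows the later use of $Q^{-1}$ to recover the original walk variables.

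First I would record the structural observation that $Q$ is ``lower-triangular'' in the following sense: $\tilde a$ depends only on $a$; $\tilde b_1$ depends only on $(a,b_1)$; and in general $\tilde b_s$ depends only on $(a,b_1,\dots,b_s)$, with the dependence on $b_s$ itself being the affine-in-$\mathbb{Z}_3$ map $b_s\mapsto (b_s - f_s(a,b_1,\dots,b_{s-1}))\bmod 3$. The key elementary fact, supplied by the preceding lemma (Eqs.~(\ref{sfgfsgm}) and (\ref{sbdfbsf})), is that for each fixed value of the ``shift'' $r = f_s(a,b_1,\dots,b_{s-1})\in\{0,1,2\}$, the map $t\mapsto (t-r)\bmod 3$ is a bijection of $\{0,1,2\}$ with inverse $\tilde t\mapsto (\tilde t + r)\bmod 3$.

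The inverse of $Q$ is then defined by induction on the coordinate index. Given $(\tilde a,\tilde b_1,\dots,\tilde b_l)$, set $a := \tilde a$. Having recovered $a, b_1,\dots, b_{s-1}$, the quantity $r_s := f_s(a,b_1,\dots,b_{s-1})$ is determined, and we set $b_s := (\tilde b_s + r_s)\bmod 3$. By (\ref{sbdfbsf}) this satisfies $(b_s - r_s)\bmod 3 = \tilde b_s$, i.e.\ it is consistent with (\ref{sdvabaef}); and by (\ref{sfgfsgm}) it is the \emph{only} such value, so the recovered tuple is unique. Running this recursion through $s=1,\dots,l$ produces a map $R$ with $R\circ Q = \mathrm{id}$ and $Q\circ R = \mathrm{id}$ on $\{0,1\}^{\times n}\times\{0,1,2\}^{\times l}$; hence $Q$ is a bijection. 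I would phrase the induction carefully so that at step $s$ one only uses that $f_s$ is \emph{some} function of $(a,b_1,\dots,b_{s-1})$ — no regularity is needed — which is exactly the point flagged before the lemma, since the $f_l$ arising from the Sch\"oning walk are only algorithmically defined.

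The only mild subtlety — and the thing I would be most careful about rather than the ``main obstacle'', since there is no real obstacle here — is bookkeeping the dependency chain: one must verify at each inductive step that $f_s$'s arguments $(a,b_1,\dots,b_{s-1})$ have \emph{all} already been reconstructed before $b_s$ is computed, so that $r_s$ is well-defined at that stage. This is guaranteed by the triangular ordering $\tilde a\to\tilde b_1\to\cdots\to\tilde b_l$, but it should be stated explicitly in the induction hypothesis. Everything else is the routine application of the two identities from the preliminary lemma.
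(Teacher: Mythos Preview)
Your proposal is correct and follows essentially the same approach as the paper: exploit the triangular structure together with the two $\mathrm{mod}\,3$ identities from the preceding lemma to recursively recover the $b_s$ from the $\tilde b_s$. The only cosmetic difference is that the paper establishes injectivity and surjectivity separately (injectivity via (\ref{sfgfsgm}), then surjectivity via exactly your recursive inverse and (\ref{sbdfbsf})), whereas you combine them by directly verifying that the constructed $R$ is a two-sided inverse; your packaging is slightly more economical but the content is the same.
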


\begin{proof}
To show that $Q$ is a bijection, we first show that it is injective, and then that it is surjective.

Let $(a,b_1,\ldots,b_l),(a',b'_1,\ldots,b'_l)\in \{0,1\}^{\times n}\times\{0,1,2\}^{\times l}$ be such that 
\begin{equation}
\label{dvafbadfb}
Q(a,b_1,\ldots,b_l)= Q(a',b'_1,\ldots,b'_l).
\end{equation}
By the first line of (\ref{sdvabaef}) it follows that 
\begin{equation}
\label{dfndgn}
a= \tilde{a} = a'.
\end{equation}
By the second line of (\ref{sdvabaef}) it follows that
\begin{equation}
  \big(b_1-f_1(a)\big)\mathrm{mod}\, 3 = \big(b'_1-f_1(a')\big)\mathrm{mod}\, 3,
\end{equation}
which combined with (\ref{dfndgn}) yields
\begin{equation}
  \big(b_1-f_1(a)\big)\mathrm{mod}\, 3 = \big(b'_1-f_1(a)\big)\mathrm{mod}\, 3.
\end{equation}
Since $f_1(a),b_1,b'_1\in\{0,1,2\}$ it follows by (\ref{sfgfsgm}) that  
\begin{equation}
b_1 = b'_1.
\end{equation}

As an induction hypothesis, assume that  for some $s\geq 2$, it is the case that 
\begin{equation}
a = a',\quad b_{j} = b'_{j},\quad j = 1,\ldots, s-1.
\end{equation}
The $s$th line of (\ref{dvafbadfb}) implies
\begin{equation}
\label{adbvsdfb}
\begin{split}
 & \big(b_s - f_s(a,b_1,\ldots, b_{s-1}) \big)\mathrm{mod}\, 3= \big(b'_s- f_s(a',b'_1,\ldots,b'_{s-1})\big)\mathrm{mod}\, 3.
\end{split}
\end{equation}
By the induction hypothesis, this implies
\begin{equation}
\begin{split}
 & \big(b_s - f_s(a,b_1,\ldots, b_{s-1}) \big)\mathrm{mod}\, 3= \big(b'_s- f_s(a,b_1,\ldots,b_{s-1})\big)\mathrm{mod}\, 3.
\end{split}
\end{equation}
Since $b_s,b'_s,f_s(a,b_1,\ldots,b_{s-1}) \in\{0,1,2\}$, it follows by (\ref{sfgfsgm}) that  
\begin{equation}
b_s = b'_s.
\end{equation}
Since the induction hypothesis is true for $s= 2$, we can conclude that it is true for all $s = 2,\ldots, l$. We can thus conclude that $Q$ is injective.

Next we wish to show that $Q$ is surjective onto  $\{0,1\}^{\times n}\times \{0,1,2\}^{\times l}$.
Let $ (\tilde{a}',\tilde{b}'_1,\ldots,\tilde{b}'_l)\in \{0,1\}^{\times n}\times \{0,1,2\}^{\times l}$. 
Define
\begin{equation}
\label{dfbdgngssng}
\begin{split}
a :=& \tilde{a}',\\
b_1 := & \big(\tilde{b}'_1 +f_1(\tilde{a}')\big)\mathrm{mod}\, 3 = \big(\tilde{b}'_1 +f_1(\tilde{a}')\big)\mathrm{mod}\, 3,
\end{split}
\end{equation}
and the sequence $(b_j)_{j=2}^{l}$ recursively by
\begin{equation}
\label{jutmutuz}
b_{j} := \big(\tilde{b}'_j + f_j(a,b_{j-1},\ldots, b_1)\big)\mathrm{mod}\, 3,\quad j = 2,\ldots, l,
\end{equation}
for $a'$ and $b_1'$ as defined in (\ref{dfbdgngssng}).
In the following, we wish to show that  $Q(a,b_1,\ldots,b_l) =  (\tilde{a}',\tilde{b}'_1,\ldots,\tilde{b}'_l)$. For notational convenience, we introduce the components $Q_0(a,b_1,\ldots,b_l)  := \tilde{a}$ and $Q_j(a,b_1,\ldots,b_l) := \tilde{b}_j$, quad $j =m2,\ldots,l$.

By the first line of (\ref{sdvabaef}) we have
\begin{equation}
\begin{split}
Q_0(a,b_1,\ldots,b_l) = & a = \tilde{a}'.
\end{split}
\end{equation}
By the second line of (\ref{sdvabaef})
\begin{equation}
\begin{split}
Q_1(a,b_1,\ldots,b_l) = & \big(b_1-f_1(a)\big)\mathrm{mod}\, 3\\
= & \Big( \big(\tilde{b}'_1 +f_1(\tilde{a}')\big)\mathrm{mod}\, 3  -f_1(a)\Big)\mathrm{mod}\, 3   \\  
= & \Big( \big(\tilde{b}'_1 +f_1(\tilde{a}')\big)\mathrm{mod}\, 3  -f_1(\tilde{a}')\Big)\mathrm{mod}\, 3   \\  
& [\textrm{By (\ref{sbdfbsf})}]\\
= &  \tilde{b}'_1. 
\end{split}
\end{equation}
For all $j\geq 2$ we moreover have
\begin{equation}
\begin{split}
Q_j(a,b_1,\ldots,b_l) = & \big( b_j-f_j(a,b_1,\ldots,b_{j-1})\big)\mathrm{mod}\, 3\\
& [\textrm{By (\ref{jutmutuz})}]\\
= &  \Big(  \big(\tilde{b}'_j + f_j(a,b_{j-1},\ldots, b_1)\big)   -f_j(a,b_1,\ldots,b_{j-1})\Big)\mathrm{mod}\, 3\\
& [\textrm{By (\ref{sbdfbsf})}]\\
= & \tilde{b}'_j.
\end{split}
\end{equation}
Hence, we can conclude that $Q(a,b_1,\ldots,b_l) =  (\tilde{a}',\tilde{b}'_1,\ldots,\tilde{b}'_l)$. Hence $Q$ is surjective, and thus bijective.
\end{proof}

\subsection{\label{adfbafnsngs}  Transformations that preserve uniformity}

We make the following basic observation
\begin{lemma}
\label{fnblalnnlba}
Let $\mathcal{S}$ be some finite set. Let $Q:\mathcal{S}\rightarrow\mathcal{S}$ be invertible. Let $R$ be some random variable on $\mathcal{S}$. If $R$ is uniformly distributed over $\mathcal{S}$, the $Q(R)$ is also uniformly distributed over $\mathcal{S}$.
\end{lemma}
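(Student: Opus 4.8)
The plan is to compute the distribution of $Q(R)$ directly from its definition as a pushforward. Fix an arbitrary element $s\in\mathcal{S}$ and evaluate $P\big(Q(R)=s\big)$ by rewriting the event $\{Q(R)=s\}$ in terms of $R$.

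First I would use invertibility: since $Q:\mathcal{S}\to\mathcal{S}$ is a bijection, the preimage $Q^{-1}(\{s\})$ consists of exactly one element, namely $Q^{-1}(s)$. Hence
\begin{equation*}
	P\big(Q(R)=s\big)
	=
	P\big(R\in Q^{-1}(\{s\})\big)
	=
	P\big(R = Q^{-1}(s)\big).
\end{equation*}
Next I would invoke the hypothesis that $R$ is uniform on the finite set $\mathcal{S}$, so that $P(R=t)=1/|\mathcal{S}|$ for every $t\in\mathcal{S}$; in particular $P\big(R=Q^{-1}(s)\big)=1/|\mathcal{S}|$. Since $s\in\mathcal{S}$ was arbitrary, $P\big(Q(R)=s\big)=1/|\mathcal{S}|$ for all $s$, which is exactly the statement that $Q(R)$ is uniformly distributed over $\mathcal{S}$.

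There is no real obstacle here; the lemma is elementary. The only points that need to be stated rather than glossed over are that finiteness of $\mathcal{S}$ is what makes $1/|\mathcal{S}|$ a well-defined uniform weight, and that ``invertible'' is used precisely to guarantee that each singleton has a singleton preimage (so no multiplicities are introduced when passing from $R$ to $Q(R)$). An equivalent one-line phrasing, if preferred, is to note that $Q$ merely permutes the atoms of the uniform distribution and a permutation of equal weights leaves the distribution unchanged.
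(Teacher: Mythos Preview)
Your proof is correct and follows exactly the same approach as the paper: compute $P(Q(R)=s)=P(R=Q^{-1}(s))=1/|\mathcal{S}|$ using invertibility and uniformity. The paper's version is just the compressed one-line form of your argument.
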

\begin{proof}
\begin{equation}
\begin{split}
P\big(Q(R) = s\big) = & P\big(R = Q^{-1}(s)\big) = \frac{1}{|\mathcal{S}|}.
\end{split}
\end{equation}
\end{proof}

\begin{lemma}
\label{mnuzmueeumt} 
Let $f_1:\{0,1\}^{\times n}\rightarrow \{0,1,2\}$ and 
 $f_s:\{0,1\}^{\times n}\times \{0,1,2\}^{\times (s-1)}\rightarrow \{0,1,2\}$ for $s = 2,\ldots, l$ be given. 
Assume that $B_1,\ldots,B_l$ are random variables that take values in $\{0,1,2\}$, $A$ be a random variable that takes values in $\{0,1\}^{\times n}$, and that these are distributed as 
\begin{equation}
\label{fdbdaha}
P(A = a, B_1 = b_1,\ldots, B_l = b_l) = \frac{1}{2^n3^l},\quad\forall a\in \{0,1\}^{\times n},\quad \forall (b_1,\ldots, b_l)\in\{0,1,2\}^{\times l}.
\end{equation}
Define $\tilde{B}_1,\ldots, \tilde{B}_l$ by
\begin{equation}
\label{fgnsfgfgnngf}
\begin{split}
\tilde{B}_1 := &  \big(B_1-f_1(A)\big)\mathrm{mod}\, 3,\\
\tilde{B}_2 := & \big(B_2 - f_2(A, B_1)\big)\mathrm{mod}\, 3,\\
\tilde{B}_3 := & \big(B_3  - f_3(A, B_1,B_2)\big)\mathrm{mod}\, 3,\\
\vdots & \\
\tilde{B}_l := & \big(B_l - f_l(A,B_1,\ldots, B_{l-1})\big)\mathrm{mod}\, 3.
\end{split}
\end{equation}
Then 
\begin{equation}
\label{fgmhmdghm}
P(A = a, \tilde{B}_1 = \tilde{b}_1,\ldots, \tilde{B}_l = \tilde{b}_l) = \frac{1}{2^n3^l},\quad\forall a\in \{0,1\}^{\times n},\quad \forall (\tilde{b}_1,\ldots, \tilde{b}_l)\in\{0,1,2\}^{\times l}.
\end{equation} 
Consequently, $A,\tilde{B}_1,\ldots,\tilde{B}_l$ are independent and uniformly distributed.
\end{lemma}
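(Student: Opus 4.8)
The plan is to observe that the passage from $(A,B_1,\dots,B_l)$ to $(A,\tilde B_1,\dots,\tilde B_l)$ is nothing but the deterministic map $Q$ of Lemma~\ref{xfgrqreqghr} applied to the random vector $(A,B_1,\dots,B_l)$, and then to combine the bijectivity of $Q$ with the uniformity-preservation statement of Lemma~\ref{fnblalnnlba}.

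First I would make the identification precise. Set $\mathcal{S}:=\{0,1\}^{\times n}\times\{0,1,2\}^{\times l}$, a finite set of cardinality $2^n 3^l$, and let $Q:\mathcal{S}\to\mathcal{S}$ be the map built from the \emph{same} functions $f_1,\dots,f_l$ via (\ref{sdvabaef}). Comparing (\ref{sdvabaef}) with (\ref{fgnsfgfgnngf}) componentwise --- the zeroth component is the identity on $a$, and the $s$-th component sends $b_s$ to $\big(b_s-f_s(a,b_1,\dots,b_{s-1})\big)\,\mathrm{mod}\,3$ --- shows that, as random variables,
\[
(A,\tilde B_1,\dots,\tilde B_l)=Q(A,B_1,\dots,B_l).
\]
By Lemma~\ref{xfgrqreqghr}, $Q$ is a bijection of $\mathcal{S}$. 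By hypothesis (\ref{fdbdaha}), the random vector $(A,B_1,\dots,B_l)$ is uniformly distributed on $\mathcal{S}$, so applying Lemma~\ref{fnblalnnlba} with $R=(A,B_1,\dots,B_l)$ and the bijection $Q$ yields that $Q(R)=(A,\tilde B_1,\dots,\tilde B_l)$ is again uniformly distributed on $\mathcal{S}$; this is exactly (\ref{fgmhmdghm}).

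Finally I would read off mutual independence and marginal uniformity from the uniform joint law in the standard way: summing (\ref{fgmhmdghm}) over all but one coordinate gives $P(A=a)=1/2^n$ for every $a$ and $P(\tilde B_s=\tilde b_s)=1/3$ for every $s$ and every $\tilde b_s$, and since the joint probability $1/(2^n 3^l)$ factorizes as the product of these marginals for every tuple, the variables $A,\tilde B_1,\dots,\tilde B_l$ are independent and uniform.

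The argument is essentially a one-line composition of two already-proved lemmas, so there is no substantive obstacle. The only point requiring a little care --- and the place where a careless write-up could go wrong --- is the bookkeeping that matches the recursive definition (\ref{fgnsfgfgnngf}) to the component formulas (\ref{sdvabaef}) of $Q$: one must check that each $f_s$ in (\ref{fgnsfgfgnngf}) is fed the \emph{original} variables $B_1,\dots,B_{s-1}$ (not the tilded ones), so that the composition is literally $Q$ and Lemma~\ref{xfgrqreqghr} applies verbatim.
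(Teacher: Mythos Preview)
Your proof is correct and follows essentially the same approach as the paper: identify $(A,\tilde B_1,\dots,\tilde B_l)$ as the image of $(A,B_1,\dots,B_l)$ under the bijection $Q$ of Lemma~\ref{xfgrqreqghr}, invoke Lemma~\ref{fnblalnnlba} to preserve uniformity, and then read off independence from the factorized joint law. Your explicit remark that the $f_s$ in (\ref{fgnsfgfgnngf}) are fed the original $B_1,\dots,B_{s-1}$ rather than the tilded variables is a helpful clarification that the paper leaves implicit.
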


\begin{proof}
By (\ref{fdbdaha}) we know that $(A,B_1,\ldots,B_l)$ is uniformly distributed on $\{0,1\}^{\times n}\times \{0,1,2\}^{\times l}$.
With the additional definition $\tilde{A} := A$, we note that (\ref{fgnsfgfgnngf}) can be rewritten as
\begin{equation}
(\tilde{A},\tilde{B}_1,\ldots,\tilde{B}_l) := Q(A,B_1,\ldots,B_l),
\end{equation}
where $Q:\{0,1\}^{\times n}\times \{0,1,2\}^{\times l}\rightarrow \{0,1\}^{\times n}\times \{0,1,2\}^{\times l}$ is as defined in Lemma
\ref{xfgrqreqghr}. By Lemma \ref{xfgrqreqghr}, we moreover know that $Q$ is a bijection on $\{0,1\}^{\times n}\times \{0,1,2\}^{\times l}$ and thus invertible. Hence, by Lemma \ref{fgmhmdghm}, we know that $(\tilde{A},\tilde{B}_1,\ldots,\tilde{B}_l)$ also is uniformly distributed over $\{0,1\}^{\times n}\times \{0,1,2\}^{\times l}$. Since $\tilde{A} = A$, we can conclude that (\ref{fgmhmdghm}) holds. 

By (\ref{fgmhmdghm}) i follows that 
\begin{equation}
\begin{split}
& P(A = a, \tilde{B}_1 = \tilde{b}_1,\ldots, \tilde{B}_l = \tilde{b}_l) = P(A = a)P(\tilde{B}_1 = \tilde{b}_1)\cdots P(\tilde{B}_l = \tilde{b}_l),\\
& P(A =a) = \frac{1}{2^n},\quad P(\tilde{B}_1 = \tilde{b}_1) = \frac{1}{3},\ldots,P(\tilde{B}_l = \tilde{b}_l) = \frac{1}{3}.
\end{split}
\end{equation}
and thus $A$, $\tilde{B}_1,\ldots,\tilde{B}_l$ are independent and uniformly distributed.

\end{proof}


\subsection{\label{ndtehntdhmt} The process $(\tilde{d}_l)_{l}$ is a Markov chain}

\begin{proposition}
\label{adfbdfsgnsrgnr}
Let $(\tilde{d}_l)_{l}$ be the process as defined by (\ref{sfgnsfgngf}) and (\ref{fgnsfgnfg}), with respect  to the variables $A$, $B_1,B_2,\ldots$ distributed as in (\ref{dfabadfbadfn}). For each $m$ there exist
variables $\tilde{B}_1,\cdots,\tilde{B}_m$ that are iid and uniformly distributed on $\{0,1,2\}$, and are independent of $A$, such  that
\begin{equation}
\label{fdnsfnmAgain}
\begin{split}
\tilde{d}_0 := &  d_H(A,x^{\star}),\\
\tilde{d}_{l} := & \left\{\begin{matrix}
0 & \mathrm{if} & \tilde{d}_{l-1} = 0\\
\tilde{d}_{l-1} + 1 & \mathrm{if} & \tilde{d}_{l-1} \neq 0,\quad  \tilde{B}_l  \neq 0\\
\tilde{d}_{l-1} - 1 & \mathrm{if} &  \tilde{d}_{l-1} \neq 0,\quad
\tilde{B}_l  = 0
\end{matrix}\right. \quad l = 1,2,\ldots,m
\end{split}
\end{equation}
Hence, $(\tilde{d}_l)_{l}$ is a Markov chain described by the transition probabilities
\begin{equation}
\label{fgmhmdghdg}
P(\tilde{d}_{l+1} = j|\tilde{d}_{l} = k) = \delta_{j,0}\delta_{k,0} + (1-\delta_{k,0})\big(\frac{1}{3}\delta_{j,k-1} +\frac{2}{3}\delta_{j,k+1}\big),\quad \forall j,k\in\mathbb{N},\quad \forall l
\end{equation}
with initial distribution
\begin{equation}
P(\tilde{d}_0 = j) = P\big(d_H(A,x^{\star})= j\big).
\end{equation}
Moreover, for the distribution of $A$ as in  (\ref{dfabadfbadfn}) we have
\begin{equation}
\label{afadaanen}
P(\tilde{d}_0 = j) = \left\{\begin{matrix} \frac{1}{2^n}\binom{n}{j}&,\quad 0\leq j\leq n\\
0&  \mathrm{otherwise}
\end{matrix}\right.
\end{equation}

\end{proposition}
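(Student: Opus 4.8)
\emph{Proof proposal.} The plan is to apply the measure-preserving change of variables of Lemma~\ref{mnuzmueeumt} to the (algorithmically defined) functions $f_l$ from~(\ref{sfgnsfgngf}), and then to observe that this substitution both (i) keeps the randomness product-uniform and (ii) collapses the history-dependent recursion~(\ref{fgnsfgnfg}) into the memoryless form~(\ref{fdnsfnmAgain}). The genuinely nontrivial step — that such a ``derandomizing'' substitution preserves uniformity despite the $f_l$ being uncontrollable functions of the whole past — has already been carried out in Lemmas~\ref{xfgrqreqghr} and~\ref{mnuzmueeumt}, whose hypotheses constrain the $f_l$ only through their domains and ranges. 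So what is left here is essentially bookkeeping.

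Concretely, for fixed $m$ I would set $\tilde B_l := \big(B_l - f_l(A,B_1,\dots,B_{l-1})\big)\bmod 3$ for $l=1,\dots,m$, with the $f_l$ exactly as in~(\ref{sfgnsfgngf}). Since $(A,B_1,\dots,B_m)$ is uniform on $\{0,1\}^{\times n}\times\{0,1,2\}^{\times m}$ by~(\ref{dfabadfbadfn}), Lemma~\ref{mnuzmueeumt} applies verbatim and yields that $A,\tilde B_1,\dots,\tilde B_m$ are independent and uniform; in particular each $\tilde B_l$ is uniform on $\{0,1,2\}$ and independent of $(A,\tilde B_1,\dots,\tilde B_{l-1})$, which is the assertion of the first sentence of Proposition~\ref{adfbdfsgnsrgnr}.

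Next I would check the sample-path identity between~(\ref{fgnsfgnfg}) and~(\ref{fdnsfnmAgain}). The only way the step-$l$ randomness enters~(\ref{fgnsfgnfg}) is via the dichotomy ``$b_l=f_l(a,b_1,\dots,b_{l-1})$'' versus ``$b_l\neq f_l(a,b_1,\dots,b_{l-1})$'', and by~(\ref{sfgfsgm}) this is precisely the dichotomy ``$\tilde b_l=0$'' versus ``$\tilde b_l\neq 0$''. Substituting, $\tilde d_l$ becomes the stated deterministic function of $(\tilde d_{l-1},\tilde B_l)$, while $\tilde d_0=d_H(A,x^\star)$ depends on $A$ alone; a one-line induction then shows that $(\tilde d_0,\dots,\tilde d_{l-1})$ is a function of $(A,\tilde B_1,\dots,\tilde B_{l-1})$, hence independent of $\tilde B_l$. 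The Markov property and the transition probabilities~(\ref{fgmhmdghdg}) follow: conditioning on the past fixes $\tilde d_{l-1}=k$ but leaves $\tilde B_l$ uniform on $\{0,1,2\}$, so from $k=0$ the chain stays at $0$, and from $k\neq 0$ it moves to $k-1$ with probability $\tfrac13$ (the event $\tilde B_l=0$) and to $k+1$ with probability $\tfrac23$ (the event $\tilde B_l\in\{1,2\}$). Finally $P(\tilde d_0=j)=P(d_H(A,x^\star)=j)$, and since $A$ is uniform on $\{0,1\}^{\times n}$ there are $\binom{n}{j}$ strings at Hamming distance $j$ from $x^\star$, giving~(\ref{afadaanen}).

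The only point that I expect to need care, and which I would spell out explicitly, is that $\tilde B_l$ is defined from the \emph{original} variables $B_1,\dots,B_l$ rather than from $\tilde B_1,\dots,\tilde B_{l-1}$; one therefore has to argue separately, by the induction above, that the $\tilde d_l$ can nevertheless be expressed purely in terms of $A$ and the $\tilde B$'s, so that the independence supplied by Lemma~\ref{mnuzmueeumt} can actually be used to conclude the Markov property. Everything else is routine substitution and the elementary counting of the Hamming sphere.
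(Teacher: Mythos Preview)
Your proposal is correct and follows essentially the same route as the paper: define $\tilde B_l=(B_l-f_l(A,B_1,\dots,B_{l-1}))\bmod 3$, invoke Lemma~\ref{mnuzmueeumt} to get joint uniformity and independence, rewrite the recursion~(\ref{fgnsfgnfg}) so that the step depends only on whether $\tilde B_l=0$, and read off the Markov property, transition probabilities, and initial distribution. Your explicit induction showing that $(\tilde d_0,\dots,\tilde d_{l-1})$ is measurable with respect to $(A,\tilde B_1,\dots,\tilde B_{l-1})$ is, if anything, a slight sharpening of what the paper leaves implicit.
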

In (\ref{fgmhmdghdg}), the term $\delta_{j,0}\delta_{k,0}$ signifies $d = 0$ being an absorbing state. In the second term, the effect of the factor  $(1-\delta_{k,0})$ is that if the chain is not in the absorbing state, then the transition probabilities are given by $\frac{1}{3}\delta_{j,k-1} +\frac{2}{3}\delta_{j,k+1}$. Hence, with probability $1/3$, it takes a step `down', and with probability $2/3$ it takes a step `up'.

\begin{proof}

For $t,r\in \{0,1,2\}$ it is the case that 
\begin{equation}
t = r \quad\Leftrightarrow \quad (t-r)\mathrm{mod}\,3 = 0.
\end{equation}
By this observation, it follows that (\ref{fgnsfgnfg}) can be rewritten as
\begin{equation}
\label{badntrnwrt}
\begin{split}
\tilde{d}_0(a,b) := &  d_H(a,x^{\star}),\\
\tilde{d}_{l}(a,b_{1},\ldots, b_{l}) := & \left\{\begin{matrix}
0 & \mathrm{if} & \tilde{d}_{l-1}(a,b_{1},\ldots,b_{l-1}) = 0\\
\tilde{d}_{l-1}(a,b_{1},\ldots, b_{l-1}) + 1 & \mathrm{if} & \tilde{d}_{l-1}(a,b_{1},\ldots, b_{l-1}) \neq 0,\quad  \big(b_l -f_l(a,b_{1},\ldots, b_{l-1})\big)\mathrm{mod}\,3  \neq 0\\
\tilde{d}_{l-1}(a,b_{1},\ldots, b_{l-1}) - 1 & \mathrm{if} &  \tilde{d}_{l-1}(a,b_{1},\ldots, b_{l-1}) \neq 0,\quad
 \big(b_l -f_l(a,b_{1},\ldots, b_{l-1})\big)\mathrm{mod}\,3  = 0
\end{matrix}\right. \quad l = 1,2,\ldots
\end{split}
\end{equation}
Next, we rewrite (\ref{badntrnwrt}) such that we suppress the explicit dependence on the elementary events $(a,b)$.
\begin{equation}
\label{fgnsfgmsmfg}
\begin{split}
\tilde{d}_0 := &  d_H(A,x^{\star}),\\
\tilde{d}_{l} := & \left\{\begin{matrix}
0 & \mathrm{if} & \tilde{d}_{l-1} = 0\\
\tilde{d}_{l-1} + 1 & \mathrm{if} & \tilde{d}_{l-1} \neq 0,\quad  \big(B_l -f_l(A,B_{1},\ldots, B_{l-1})\big)\mathrm{mod}\,3  \neq 0\\
\tilde{d}_{l-1} - 1 & \mathrm{if} &  \tilde{d}_{l-1} \neq 0,\quad
 \big(B_l -f_l(A,B_{1},\ldots, B_{l-1})\big)\mathrm{mod}\,3  = 0
\end{matrix}\right. \quad l = 1,2,\ldots
\end{split}
\end{equation}
If we define $\tilde{B}_1,\ldots, \tilde{B}_l$ by
\begin{equation}
\label{fgnsfgfgnngf3}
\begin{split}
\tilde{B}_1 := &  f_1(A),\\
\tilde{B}_2 := & \big(B_2 - f_2(A, B_1)\big)\mathrm{mod}\, 3,\\
\tilde{B}_3 := & \big(B_3  - f_3(A, B_1, B_2)\big)\mathrm{mod}\, 3,\\
\vdots & \\
\tilde{B}_l := & \big(B_l - f_l(A,B_1,\ldots, B_{l-1})\big)\mathrm{mod}\, 3,
\end{split}
\end{equation}
then we can rewrite (\ref{fgnsfgmsmfg}) as
\begin{equation}
\label{fdnsfnm}
\begin{split}
\tilde{d}_0 := &  d_H(A,x^{\star}),\\
\tilde{d}_{l} := & \left\{\begin{matrix}
0 & \mathrm{if} & \tilde{d}_{l-1} = 0\\
\tilde{d}_{l-1} + 1 & \mathrm{if} & \tilde{d}_{l-1} \neq 0,\quad  \tilde{B}_l  \neq 0\\
\tilde{d}_{l-1} - 1 & \mathrm{if} &  \tilde{d}_{l-1} \neq 0,\quad
\tilde{B}_l  = 0
\end{matrix}\right. \quad l = 1,2,\ldots
\end{split}
\end{equation}
By Lemma \ref{mnuzmueeumt} we know that $A,\tilde{B}_1,\ldots,\tilde{B}_l$ are independent and uniformly distributed.
Since the l:th step is determined solely by $\tilde{B}_l$, and these are independent of each other, and of $A$, it follows that $(\tilde{d}_l)_{l}$ is a Markov chain. 
By inspecting (\ref{fdnsfnm}) we first see that 
\begin{equation}
\label{dfngng}
\begin{split}
P(\tilde{d}_l = j|\tilde{d}_{l-1} = 0)  = \delta_{j,0},
\end{split}
\end{equation}
while for $\tilde{d}_{l-1} = k\neq 0$ we have
\begin{equation}
\label{gfnsfgngf}
\begin{split}
P(\tilde{d}_l = j|\tilde{d}_{l-1} = k)  = & \delta_{j,k+1} P(\tilde{B}_l  \neq 0) + \delta_{j,k-1}P(\tilde{B}_l  = 0)\\
= & \frac{2}{3}\delta_{j,k+1} + \frac{1}{3}\delta_{j,k-1},
\end{split}
\end{equation}
where the last step follows since each $\tilde{B}_l$ is uniformly distributed over $\{0,1,2\}$.
By combining the cases (\ref{dfngng}) and (\ref{gfnsfgngf}) we obtain (\ref{fgmhmdghdg}). By (\ref{fdnsfnm}) it moreover follows that $P(\tilde{d}_0 = j) = P\big(d_H(A,x^{\star})= j\big)$. Since $A$ is uniformly distributed over $\{0,1\}^{\times n}$, it means that $d_H(A,x^{\star})$ is binomially distributed. Thus for $0\leq j \leq n$, we have $P(\tilde{d}_0 = j) =\frac{1}{2^n}\binom{n}{j}$.

\end{proof}

\subsection{Relating probabilities of $(x_l)_l$ and $(\tilde{d}_l)_l$}

The reason for why we introduce the walk $(\tilde{d}_l)_l$ is in order to bound the relevant success-probabilities of the more complicated true Sch\"oning-walk $(x_l)_l$.  The lemma below considers two such inequalities, which we will use when we determine the bounds for the Groverized walk. 

 \begin{lemma}
\label{TransitionFromxTod}
Let $(x_l)_{l\in\mathbb{N}}$ be the Sch\"oning process for bit strings of length $n$, with $x^{\star}$ the selected satisfying assignment. Let $(\tilde{d}_l)_{l}$ be the process as defined by (\ref{sfgnsfgngf}) and (\ref{fgnsfgnfg}).  Then
\begin{equation}
\label{wrtnwrntwrt}
\begin{split}
 P(x_m  = x^{\star})\geq P(\tilde{d}_m = 0),
\end{split}
\end{equation}
\begin{equation}
\label{dfbwbwrttrn}
\begin{split}
&  P\big(  x_m = x^{\star}|d_H(x_0,x^{\star}) = j\big)\geq P(\tilde{d}_m = 0|\tilde{d}_0 = j), 
\end{split}
\end{equation}

\end{lemma}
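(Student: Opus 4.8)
The plan is to obtain both inequalities as essentially immediate consequences of the pathwise domination established in Lemma~\ref{bound}, exploiting that $(x_l)_l$ and $(\tilde d_l)_l$ are defined on one and the same probability space, the one carrying the independent uniform variables $A,B_1,\dots,B_m$. On that space the inequality $d_H(x_l,x^\star)\le\tilde d_l$ holds outcome-by-outcome, so the only work is to translate it into a statement about the probabilities of the relevant events, with a little extra care for the conditioning.

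First I would record the trivial identity of random variables $\tilde d_0 = d_H(A,x^\star)=d_H(x_0,x^\star)$, which is immediate from $x_0(a,b)=a$ and $\tilde d_0(a,b)=d_H(a,x^\star)$. In particular the two events $\{\tilde d_0 = j\}$ and $\{d_H(x_0,x^\star)=j\}$ coincide, so the conditionings appearing in (\ref{dfbwbwrttrn}) are taken with respect to the \emph{same} event; by (\ref{afadaanen}) its probability is strictly positive exactly when $0\le j\le n$, and for the remaining values of $j$ the claim is vacuous.

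The heart of the argument is then one line. By Lemma~\ref{bound}, for \emph{every} elementary outcome $(a,b)\in\{0,1\}^{\times n}\times\{0,1,2\}^{\times m}$ we have $d_H\big(x_m(a,b),x^\star\big)\le\tilde d_m(a,b)$; since the Hamming distance is a nonnegative integer, $\tilde d_m=0$ forces $d_H(x_m,x^\star)=0$, i.e.\ $x_m=x^\star$. Hence $\{\tilde d_m = 0\}\subseteq\{x_m=x^\star\}$ as subsets of the sample space, and taking probabilities yields (\ref{wrtnwrntwrt}). Intersecting this inclusion with $\{\tilde d_0=j\}=\{d_H(x_0,x^\star)=j\}$ preserves it, and dividing the probabilities of the two sides by $P(\tilde d_0=j)=P(d_H(x_0,x^\star)=j)>0$ yields (\ref{dfbwbwrttrn}).

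I do not expect any genuine obstacle here: the argument uses only the coupling of Lemma~\ref{bound} together with the integrality of $d_H$, and no Markov or independence structure; the absorbing behaviour of $x^\star$ for $(x_l)_l$ and of $0$ for $(\tilde d_l)_l$ is precisely what makes the domination valid uniformly for all $l\le m$. The one thing to watch is purely notational: the process $(\tilde d_l)_l$ used here is the $\mathbb{N}$-valued chain of this appendix, not the $\mathbb{Z}$-walk $(d_l)_l$ of the main text.
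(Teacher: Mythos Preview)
Your proposal is correct and follows essentially the same approach as the paper: derive the event inclusion $\{\tilde d_m=0\}\subseteq\{x_m=x^\star\}$ from the pathwise domination of Lemma~\ref{bound}, take probabilities for (\ref{wrtnwrntwrt}), and then use the identity $\{\tilde d_0=j\}=\{d_H(x_0,x^\star)=j\}$ to intersect and divide through for (\ref{dfbwbwrttrn}). The paper's proof is written out in slightly more detail but is logically identical.
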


\begin{proof}
We begin by proving inequality (\ref{wrtnwrntwrt}). For the sake of notational simplicity, we let $\omega$ denote the elements of the event space (where we could regard $\omega$ as $(a,b)$ or $(a,\tilde{b})$).
By Lemma \ref{bound} we know that 
\begin{equation}
\begin{split} 
& d_H\big(x_m(\omega),x^{\star}\big) \leq \tilde{d}_m(\omega),
\end{split}
\end{equation}
which  implies
\begin{equation}
\begin{split}
& \tilde{d}_m(\omega) = 0\quad \Rightarrow \quad d_H\big(x_m(\omega),x^{\star}\big) =0\\
\end{split}
\end{equation}
and thus
\begin{equation}
\label{fdnsnnnrnwm}
\begin{split}
& \{\omega: \tilde{d}_m(\omega) = 0\} \subset \{\omega: d_H\big(x_m(\omega),x^{\star}\big) = 0\} = \{\omega: x_m(\omega) = x^{\star}\}
\end{split}
\end{equation}
and thus 
\begin{equation}
\begin{split}
P(\tilde{d}_m = 0) & = P\big(\{\omega: \tilde{d}_m(\omega) = 0\}\big) \\
& \leq P\big(  \{\omega: x_m(\omega) = x^{\star}\}\big)\\
& = P(x_m  = x^{\star}),
\end{split}
\end{equation}
which proves (\ref{wrtnwrntwrt}).


We next turn to the proof of (\ref{dfbwbwrttrn})
By definition of the walk $(\tilde{d}_l)_l$ we have $\tilde{d}_0(\omega) = d_H\big(x_0(\omega),x^{\star}\big)$, and thus 
\begin{equation}
\label{muzfmzurz}
\{\omega: \tilde{d}_0(\omega) = j\} = \{\omega: d_H\big(x_0(\omega),x^{\star}\big) = j\}
\end{equation}
and consequently
\begin{equation}
\label{afbaefbebnen}
P(\tilde{d}_0 = j) = P\big(d_H(x_0,x^{\star}) = j\big).
\end{equation}
By combining (\ref{fdnsnnnrnwm})  and (\ref{muzfmzurz}) we obtain
\begin{equation}
\{\omega: \tilde{d}_m(\omega) = 0\}\cap \{\omega: \tilde{d}_0(\omega) = j\} \subset \{\omega: x_m(\omega) = x^{\star}\}\cap\{\omega: d_H\big(x_0(\omega),x^{\star}\big) = j\}
\end{equation}
and consequently
\begin{equation}
\begin{split}
P(\tilde{d}_m = 0,\tilde{d}_0 = j) 
= & P\big(\{\omega: \tilde{d}_m(\omega) = 0\}\cap \{\omega: \tilde{d}_0(\omega) = j\}\big)\\
\leq & P\big( \{\omega: x_m(\omega) = x^{\star}\}\cap\{\omega: d_H\big(x_0(\omega),x^{\star}\big) = j\}\big)\\
= &  P\big(  x_m = x^{\star}, d_H(x_0,x^{\star}) = j\big).
\end{split}
\end{equation}
By combining this with (\ref{afbaefbebnen}) we can conclude that
\begin{equation}
\begin{split}
P(\tilde{d}_m = 0|\tilde{d}_0 = j) 
\leq
&  P\big(  x_m = x^{\star}|d_H(x_0,x^{\star}) = j\big),
\end{split}
\end{equation}
which proves (\ref{dfbwbwrttrn}).


\end{proof}

\subsection{From walks  on $\mathbb{N}$ to  walks  on $\mathbb{Z}$}
So far, we have replaced the projection of the Sch\"oning-process $(x_l)_l$ to the Hamming distance $d_H(x_m,x^{\star})$  with the substitute Markov-chain $(\tilde{d}_l)_l$. Similar to $x^{\star}$ being an absorbing state of $(x_l)$, the process $(\tilde{d}_l)_l$ has $0$ as absorbing state. As a model of the true Sch\"oning process, this  absorbing state certainly makes sense, since it  corresponds to a setting where we at each step monitor whether a solution has been reached, and the process is terminated once this happens.  For  the sake of obtaining tractable expressions for the relevant probabilities, we  here take one step further and instead consider a walk on $\mathbb{Z}$.  Analogously to how Lemma \ref{TransitionFromxTod} bounds the relevant probabilities of the true Sch\"oning process, with the corresponding quantities in $(\tilde{d}_l)_l$, Lemma \ref{TransitionFromdToTilded} below, bounds the relevant probabilities of $(\tilde{d}_l)_l$ in terms of corresponding quantities for a Markov-chain $(d_l)_l$ extended to the whole of $\mathbb{Z}$.

As a bit of a side remark, one may note that the results in (\ref{TransitionFromdToTilded}) does not necessarily refer to the particular Markov-chain defined by  (\ref{sfgnsfgngf}) and (\ref{fgnsfgnfg}), but could be any Markov chain on $\mathbb{N}$ with fixed transition probabilities and absorbing boundary condition at $0$.

\begin{lemma}
\label{TransitionFromdToTilded}
Let $(\tilde{d}_l)_{l\in\mathbb{N}}$ be a Markov chain on $\mathbb{N}$, with transition probabilities
\begin{equation}
\label{dbangnan}
P(\tilde{d}_{l+1} = j|\tilde{d}_{l} = k) = \delta_{j,0}\delta_{k,0} + (1-\delta_{k,0})\big((1-q)\delta_{j,k-1} +q\delta_{j,k+1}\big),\quad \forall j,k\in\mathbb{N},\quad \forall l\in\mathbb{N},
\end{equation}
for some $0\leq q\leq 1$. 
Let $(d_l)_{l\in\mathbb{N}}$ be a Markov chain on $\mathbb{Z}$, with transition probabilities
\begin{equation}
\label{gfndnnezm}
P(d_{l+1} = j|d_{l} = k) = (1-q)\delta_{j,k-1} + q\delta_{j,k+1},\quad \forall j,k\in\mathbb{Z},\quad \forall. l\in\mathbb{N}.
\end{equation}
Then
\begin{equation}
\label{adfnrtnrtnjwrtj}
P\big(d_m\leq 0|d_0 = j\big) \leq P(\tilde{d}_m = 0|\tilde{d}_0 = j),\quad \forall m\in\mathbb{N},\quad \forall j\in \mathbb{N}.
\end{equation}
Consequently, if the initial state $d_0$ is such that 
\begin{equation}
\label{shtmtsmtemtz}
P(d_0 = j) = \left\{\begin{matrix}
P(\tilde{d}_0 = j), & \quad j \geq 0\\
0 , & \quad  j <0
\end{matrix}\right.
\end{equation}
then
\begin{equation}
\label{sfgndmdhmhm}
P(d_m = 0) \leq P(d_m\leq 0) \leq P(\tilde{d}_m = 0),\quad \forall m\in\mathbb{N}.
\end{equation}
\end{lemma}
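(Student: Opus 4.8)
The plan is to build an explicit coupling of the two chains driven by a common sequence of increments, and then to check pathwise that the event $\{d_m\leq 0\}$ sits inside the event $\{\tilde d_m=0\}$. Concretely, fix $j\in\mathbb{N}$ and let $\epsilon_1,\epsilon_2,\dots$ be i.i.d.\ with $P(\epsilon_l=-1)=1-q$ and $P(\epsilon_l=+1)=q$. Define $d_0=\tilde d_0=j$ and, for $l\geq 1$, set $d_l=d_{l-1}+\epsilon_l$, while $\tilde d_l=\tilde d_{l-1}+\epsilon_l$ if $\tilde d_{l-1}\neq 0$ and $\tilde d_l=0$ if $\tilde d_{l-1}=0$. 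One reads off immediately that, under this coupling, $(d_l)_l$ has the transition law (\ref{gfndnnezm}) and $(\tilde d_l)_l$ has the transition law (\ref{dbangnan}) with initial value $j$, so it suffices to argue in the coupled space.

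The first key step is the ``no overshoot'' observation: since $(d_l)_l$ starts at $j\geq 0$ and changes by $\pm1$ at each step, it cannot become negative without first taking the value $0$; hence the hitting time $\tau:=\min\{l\geq 0:\ d_l\leq 0\}$ in fact satisfies $d_\tau=0$. The second key step is a one-line induction showing $\tilde d_l=d_l$ for all $l\leq\tau$: the base case is $\tilde d_0=d_0$, and if $\tilde d_l=d_l$ with $l<\tau$ then $d_l\geq 1$, so $\tilde d_l\neq 0$ and the two update rules coincide, giving $\tilde d_{l+1}=\tilde d_l+\epsilon_{l+1}=d_{l+1}$. In particular $\tilde d_\tau=d_\tau=0$, and since $0$ is absorbing for $(\tilde d_l)_l$ we conclude $\tilde d_l=0$ for every $l\geq\tau$.

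Combining the two steps: if $d_m\leq 0$ then $\tau\leq m$, hence $\tilde d_m=0$. Thus $\{d_m\leq 0\}\subseteq\{\tilde d_m=0\}$ in the coupled probability space, and taking probabilities yields (\ref{adfnrtnrtnjwrtj}). For the final display (\ref{sfgndmdhmhm}), note that (\ref{shtmtsmtemtz}) states that $d_0$ is supported on $\mathbb{N}$ and has there the same law as $\tilde d_0$; conditioning on the initial value and applying (\ref{adfnrtnrtnjwrtj}) termwise gives $P(d_m\leq 0)=\sum_{j\geq 0}P(d_0=j)\,P(d_m\leq 0\mid d_0=j)\leq\sum_{j\geq 0}P(\tilde d_0=j)\,P(\tilde d_m=0\mid\tilde d_0=j)=P(\tilde d_m=0)$, while $P(d_m=0)\leq P(d_m\leq 0)$ is trivial.

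I do not expect a genuine obstacle here; the computation is short once the coupling is in place. The one point that deserves care is precisely the ``no overshoot'' remark, which lets one replace the hitting time of $(-\infty,0]$ by the hitting time of $\{0\}$ and is exactly what makes the induction $\tilde d_l=d_l$ valid all the way up to and including $\tau$ (without it the two chains could separate before either reaches $0$). Everything else is bookkeeping.
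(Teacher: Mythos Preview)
Your proof is correct. The coupling you build is valid, the ``no overshoot'' remark is exactly right (since $d_0=j\geq 0$ and increments are $\pm 1$, the first time the walk is $\leq 0$ it is in fact $=0$), and the induction showing $\tilde d_l=d_l$ for $l\leq\tau$ is clean. The deduction of (\ref{sfgndmdhmhm}) from (\ref{adfnrtnrtnjwrtj}) by conditioning on the initial value is also fine.

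Your route differs from the paper's. The paper does not construct a coupling; instead it works directly with the transition matrices. It expands $P(\tilde d_l>0\mid\tilde d_0=j)$ as a sum over intermediate states $k_1,\dots,k_{l-1}$, uses the absorbing property to argue that only paths with all $k_s>0$ contribute, observes that on strictly positive indices the two transition kernels agree, and then upper-bounds by dropping the positivity restriction on the intermediate sums to recover $P(d_l>0\mid d_0=j)$; taking complements gives (\ref{adfnrtnrtnjwrtj}). In essence both arguments exploit the same fact---that trajectories staying strictly positive are indistinguishable under the two dynamics---but the paper expresses this as an identity of restricted matrix products followed by a monotonicity step, whereas you express it pathwise via a single probability space. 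Your version is shorter and makes the inclusion $\{d_m\leq 0\}\subseteq\{\tilde d_m=0\}$ literal rather than an inequality obtained after a chain of algebraic manipulations; the paper's version, on the other hand, avoids having to verify that the coupled $(\tilde d_l)_l$ really has the prescribed law and stays entirely within distributional statements.
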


\begin{proof}

For notational convenience, we define
\begin{equation}
\label{netntmeh}
\begin{split}
M_{j,k} := &  P(\tilde{d}_{l+1} = j|\tilde{d}_{l} = k),
\end{split}
\end{equation}
and
\begin{equation}
\label{sfgsfgnsfgnswnr}
\begin{split}
\tilde{M}_{j,k} := & P(d_{l+1} = j|d_{l} = k).
\end{split}
\end{equation}
By comparing with (\ref{dbangnan}) and (\ref{gfndnnezm}) one can see that
\begin{equation}
\label{adfbsfgnf}
\begin{split}
M_{j,k} = \tilde{M}_{j,k},\quad \forall j>0,\quad \forall k>0.
\end{split}
\end{equation}
We note that $0$ is an absorbing state for $(\tilde{d}_l)_l$. Hence, 
\begin{equation}
\begin{split}
 \tilde{d}_{s-1} = 0\quad\Rightarrow\quad \tilde{d}_{s} = 0,
\end{split}
\end{equation}
which implies
\begin{equation}
\begin{split}
\tilde{d}_s>0 \quad \Rightarrow\quad \tilde{d}_{s-1}>0,
\end{split}
\end{equation}
which in turn implies
\begin{equation}
\label{sbadngnngr}
\begin{split}
 P(\tilde{d}_s= k_s|\tilde{d}_{s-1}=0) = 0,\quad\textrm{if}\quad k_s >0.
\end{split}
\end{equation}

We begin by proving (\ref{adfnrtnrtnjwrtj}). For this purpose, assume that $j>0$. 
\begin{equation}
\begin{split}
 &P(\tilde{d}_l >0 |\tilde{d}_0 = j)\\
 = & \sum_{k_l>0}P(\tilde{d}_l =k_l |\tilde{d}_0 = j)\\
& [\textrm{By Markovianity}]\\
= & \sum_{k_l>0}\sum_{k_{l-1},\ldots,k_1} P(\tilde{d}_l = k_l|\tilde{d}_{l-1} = k_{l-1})P(\tilde{d}_{l-1} = k_{l-1}|\tilde{d}_{l-2} = k_{l-2})\cdots P(\tilde{d}_2 = k_2|\tilde{d}_1 = k_1)P(\tilde{d}_1 = k_1|\tilde{d}_0 = j)\\
= & \sum_{k_l>0}\sum_{k_{l-1}:k_{l-1}>0}\sum_{k_{l-2},\ldots,k_1} P(\tilde{d}_l = k_l |\tilde{d}_{l-1} = k_{l-1})P(\tilde{d}_{l-1} = k_{l-1}|\tilde{d}_{l-2} = k_{l-2})\cdots  P(\tilde{d}_1 = k_1|\tilde{d}_0 = j)\\
& + \sum_{k_l>0}\sum_{k_{l-2},\ldots,k_1} P(\tilde{d}_l = k_l|\tilde{d}_{l-1} = 0)P(\tilde{d}_{l-1} = 0|\tilde{d}_{l-2} = k_{l-2})\cdots P(\tilde{d}_1 = k_1|\tilde{d}_0 = j)\\
& [\textrm{Since $k_l> 0$ it follows by (\ref{sbadngnngr}) that $P(\tilde{d}_l = k_l|\tilde{d}_{l-1} = 0) = 0$.}]\\
= & \sum_{k_l>0}\sum_{k_{l-1}:k_{l-1}>0}\sum_{k_{l-2},\ldots,k_1} P(\tilde{d}_l = k_l |\tilde{d}_{l-1} = k_{l-1})P(\tilde{d}_{l-1} = k_{l-1}|\tilde{d}_{l-2} = k_{l-2})\cdots P(\tilde{d}_1 = k_1|\tilde{d}_0 = j)\\
& [\textrm{By iteration}]\\
= & \sum_{k_l>0}\sum_{k_{l-1},\ldots,k_1:k_{l-1} >0,\ldots, k_1 >0} P(\tilde{d}_l = k_l |\tilde{d}_{l-1} = k_{l-1})P(\tilde{d}_{l-1} = k_{l-1}|\tilde{d}_{l-2} = k_{l-2})\cdots  P(\tilde{d}_1 = k_1|\tilde{d}_0 = j)\\
& [\textrm{By (\ref{netntmeh})}]\\
= & \sum_{k_l>0} \sum_{k_{l-1},\ldots,k_1:k_{l-1} >0,\ldots, k_1 >0} M_{k_l,k_{l-1}}\cdots M_{k_1,j}\\
& [\textrm{Since $k_l > 0,\, k_{l-1}>0,\cdots, k_{1}>0, j>0$ if follows by (\ref{adfbsfgnf}) that}] \\
= &  \sum_{k_l>0} \sum_{k_{l-1},\ldots,k_1:k_{l-1} >0,\ldots, k_1 >0} \tilde{M}_{k_l,k_{l-1}}\cdots \tilde{M}_{k_1,j}\\
& [\quad \tilde{M}_{k_l,k_{l-1}}\geq 0  \quad]\\
\leq  &  \sum_{k_l>0}  \sum_{k_{l-1},\ldots, k_{1}}\tilde{M}_{k_l,k_{l-1}}\cdots \tilde{M}_{k_1,j}\\
& [\textrm{By (\ref{sfgsfgnsfgnswnr})}]\\
= &   \sum_{k_l>0} \sum_{k_{l-1},\ldots, k_{1}}P(d_l = k_l |d_{l-1} = k_{l-1})P(d_{l-1} = k_{l-1}|d_{l-2} = k_{l-2})\cdots  P(d_1 = k_1|d_0 = j)\\
& [\textrm{By the Markovianity}]\\
= &   \sum_{k_l>0}P(d_l = k_l|d_0 =j)\\
= & P(d_l > 0|d_0 =j).
\end{split}
\end{equation}
Consequently
\begin{equation}
\begin{split}
P(\tilde{d}_l =0 |\tilde{d}_0 = j)=  &1- P(\tilde{d}_l >0 |\tilde{d}_0 = j)\\
\geq   & 1-P(d_l > 0|d_0 =j)\\
= & P(d_l \leq 0|d_0 =j),\quad j>0.
\end{split}
\end{equation}
In the case $\tilde{d}_0 = 0$ we know that this is an absorbing state, and thus $P(\tilde{d}_l =0 |\tilde{d}_0 = 0) = 1$. Consequently, $P(d_l\leq 0|d_0 = 0)\leq P(\tilde{d}_l =0 |\tilde{d}_0 = 0) = 1$.
This thus proves the inequality in  (\ref{adfnrtnrtnjwrtj}).

With the initial distribution (\ref{shtmtsmtemtz}) we find
\begin{equation}
\begin{split}
P(d_l\leq 0) = & \sum_{j\in\mathbb{Z}}P(d_l\leq 0|d_0 = j)P(d_0 = j)\\
= & \sum_{j\geq 0}P(d_l\leq 0|d_0 = j)P(\tilde{d}_0 = j)\\
\leq  & \sum_{j\geq 0}P(\tilde{d}_l = 0|\tilde{d}_0 = j)P(\tilde{d}_0 = j)\\
= & P(\tilde{d}_l = 0).
\end{split}
\end{equation}

\end{proof}


\subsection{Bounds for Sch\"oning Walks and Groverized Walks}

Here we combine the previous observations in order to obtain the following lower bounds on the success-probability of the  Sch\"oning process. We also obtain the inequalites needed for determining the desired bound on the success-probability of the the Groverized walk.
\begin{proposition}
\label{gnsfgnsgm}
Let $(x_l)_{l\in\mathbb{N}}$ be the Sch\"oning process for bit strings of length $n$, with $x^{\star}$ the selected satisfying assignment. Let $(\tilde{d}_l)_{l}$ be the process as defined by (\ref{sfgnsfgngf}) and (\ref{fgnsfgnfg}).
Let $(d_l)_l$ be the Markov chain as defined by the transition probabilites (\ref{gfndnnezm}) for $q = 2/3$  in Lemma \ref{TransitionFromdToTilded}, for the initial state
\begin{equation}
\label{savnosdvn2}
P(d_0 = j) = \left\{\begin{matrix}
P(\tilde{d}_0 = j) =  P\big(d_H(x_0,x^{\star}) =j\big) = \frac{1}{2^n}\binom{n}{j}, & \quad n\geq j \geq 0\\
0 , & \quad  \mathrm{otherwise}
\end{matrix}\right.
\end{equation}
 Then
\begin{equation}
\label{ybfdfnggf}
\begin{split}
P(x_m = x^{\star}) \geq  &  P(\tilde{d}_m= 0) \\
 \geq &  P(d_m\leq 0)\\
 = &   \sum_{\substack{j,l: j+ m-2l \leq 0,\\ 0 \leq j\leq n,\\ 0\leq l\leq m}}  \frac{1}{2^n}\binom{n}{j} \binom{m}{l}  \left(\frac{1}{3}\right)^{l}\left(\frac{2}{3}\right)^{m-l}
\end{split}
\end{equation}
and
\begin{equation}
\label{dvafebrtnrtnw}
\begin{split}
P\big(x_m = x^{\star}\big|d_H(x_0|x^{\star}) = j\big) \geq  & P(\tilde{d}_m =  0|\tilde{d}_0 = j) \\
 \geq & P(d_m\leq 0|d_0 = j)\\
 = & \sum_{\substack{l: j+ m-2l \leq 0,\\ 0\leq l\leq m}}    \binom{m}{l}  \left(\frac{1}{3}\right)^{l}\left(\frac{2}{3}\right)^{m-l}.
\end{split}
\end{equation}
\end{proposition}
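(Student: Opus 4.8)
The plan is to stack the two reduction lemmas already proved, check that their hypotheses are met for the present choice of parameters, and then finish with an elementary random-walk computation. Concretely, the first inequality in each of the two displayed chains, $P(x_m=x^\star)\ge P(\tilde d_m=0)$ and $P(x_m=x^\star\mid d_H(x_0,x^\star)=j)\ge P(\tilde d_m=0\mid\tilde d_0=j)$, is nothing but (\ref{wrtnwrntwrt}) and (\ref{dfbwbwrttrn}) of Lemma~\ref{TransitionFromxTod}, applied to the process $(\tilde d_l)_l$ of (\ref{sfgnsfgngf})--(\ref{fgnsfgnfg}). The second inequality in each chain, $P(\tilde d_m=0)\ge P(d_m\le 0)$ and its conditional version, is (\ref{sfgndmdhmhm}) and (\ref{adfnrtnrtnjwrtj}) of Lemma~\ref{TransitionFromdToTilded}; we only use the weakest of the three estimates in (\ref{sfgndmdhmhm}), so the trivial bound $P(d_m=0)\le P(d_m\le 0)$ plays no role here.

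To invoke Lemma~\ref{TransitionFromdToTilded} I would first verify its hypotheses. By Proposition~\ref{adfbdfsgnsrgnr}, $(\tilde d_l)_l$ is a Markov chain on $\mathbb{N}$ whose transition probabilities (\ref{fgmhmdghdg}) are exactly of the form (\ref{dbangnan}) with $q=2/3$, and whose initial law is (\ref{afadaanen}). The comparison chain $(d_l)_l$ is, by construction, the Markov chain on $\mathbb{Z}$ with transition probabilities (\ref{gfndnnezm}) for the same $q=2/3$. Finally, the initial distribution prescribed in (\ref{savnosdvn2}) agrees with $P(\tilde d_0=\cdot)$ on $\mathbb{N}$ and vanishes on the negative integers, which is precisely the matching condition (\ref{shtmtsmtemtz}). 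Hence Lemma~\ref{TransitionFromdToTilded} yields $P(\tilde d_m=0)\ge P(d_m\le 0)$ as well as $P(\tilde d_m=0\mid\tilde d_0=j)\ge P(d_m\le 0\mid d_0=j)$ for every $j\in\mathbb{N}$, and composing with Lemma~\ref{TransitionFromxTod} gives the inequality parts of both (\ref{ybfdfnggf}) and (\ref{dvafebrtnrtnw}).

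It remains to evaluate the right-hand sides. For the unrestricted walk $(d_l)_l$ on $\mathbb{Z}$ with up-probability $2/3$ and down-probability $1/3$, if the walk takes $l$ down-steps among its $m$ steps then $d_m=d_0+(m-l)-l=d_0+m-2l$, and the number of down-steps is binomial with parameters $m$ and $1/3$. Therefore, starting from $d_0=j$,
\begin{equation*}
P(d_m\le 0\mid d_0=j)=\sum_{\substack{0\le l\le m\\ j+m-2l\le 0}}\binom{m}{l}\Big(\tfrac13\Big)^{l}\Big(\tfrac23\Big)^{m-l},
\end{equation*}
which is the last line of (\ref{dvafebrtnrtnw}). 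Conditioning on the starting point and using $P(d_0=j)=\frac{1}{2^n}\binom{n}{j}$ for $0\le j\le n$ (zero otherwise) gives
\begin{equation*}
P(d_m\le 0)=\sum_{j=0}^{n}\frac{1}{2^n}\binom{n}{j}\,P(d_m\le 0\mid d_0=j)=\sum_{\substack{0\le j\le n,\ 0\le l\le m\\ j+m-2l\le 0}}\frac{1}{2^n}\binom{n}{j}\binom{m}{l}\Big(\tfrac13\Big)^{l}\Big(\tfrac23\Big)^{m-l},
\end{equation*}
which is the last line of (\ref{ybfdfnggf}).

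There is no real conceptual obstacle left at this stage; the substance is in Lemmas~\ref{TransitionFromxTod} and~\ref{TransitionFromdToTilded} and Proposition~\ref{adfbdfsgnsrgnr}. The only points demanding care are bookkeeping ones: checking that the initial distribution (\ref{savnosdvn2}) matches (\ref{shtmtsmtemtz}) precisely (including the vanishing on negative integers, so that the comparison lemma applies without boundary issues), and getting the summation ranges right --- that $j+m-2l\le 0$ is the correct event $\{d_m\le 0\}$ and that $j$ runs only over $0,\dots,n$ because $d_0=d_H(x_0,x^\star)\le n$. I would state both explicitly in the write-up to avoid off-by-one errors.
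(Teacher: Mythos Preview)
Your proposal is correct and follows essentially the same route as the paper: invoke Lemma~\ref{TransitionFromxTod} for the first inequality in each chain, verify via Proposition~\ref{adfbdfsgnsrgnr} that $(\tilde d_l)_l$ has the transition kernel (\ref{dbangnan}) with $q=2/3$ and initial law (\ref{afadaanen}) so that Lemma~\ref{TransitionFromdToTilded} applies, and then finish with the binomial count of down-steps for the unrestricted walk on $\mathbb{Z}$. The paper's proof differs only in presentation order (it computes the conditional formula (\ref{adfntnwrtnrt}) first and then sums over $j$), not in substance.
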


\begin{proof}

By (\ref{wrtnwrntwrt}) in Lemma \ref{TransitionFromxTod} yields the  first  inequality in  (\ref{ybfdfnggf}).

By Proposition \ref{adfbdfsgnsrgnr} we know that $(\tilde{d}_l)_l$ is a Markov chain with transition probabilities as in (\ref{fgmhmdghdg}) and initial distribution (\ref{afadaanen}). By these observations, it follows that the  second inequality  in (\ref{ybfdfnggf}) is a direct application of (\ref{sfgndmdhmhm}) in Lemma \ref{TransitionFromdToTilded}. 
By Lemma \ref{TransitionFromdToTilded}, we also know that the Markov chain $(d_l)_l$ defined by the transition probabilities
\begin{equation}
\label{bdfnfgngf}
P(d_{l+1} = j|d_{l} = k) = \frac{1}{3}\delta_{j,k-1} + \frac{2}{3}\delta_{j,k+1},\quad \forall j,k\in\mathbb{Z},\quad \forall l\in\mathbb{N}.
\end{equation}
and initial distribution
\begin{equation}
\label{adfbdfbnf}
P(d_0 = j) = \left\{\begin{matrix}
P(\tilde{d}_0 = j), & \quad j \geq 0\\
0 , & \quad  j <0
\end{matrix}\right. \quad=  \left\{\begin{matrix} \frac{1}{2^n}\binom{n}{j}, & 0\leq j\leq n\\
0, & \mathrm{otherwise}
\end{matrix}\right.
\end{equation}
From (\ref{bdfnfgngf}) it follows that 
\begin{equation}
\label{adfntnwrtnrt}
P(d_m\leq 0|d_0 = j) = \sum_{l: j+m-2l\leq 0,\, 0\leq l\leq m} \binom{m}{l}  \left(\frac{1}{3}\right)^{l}\left(\frac{2}{3}\right)^{m-l}
\end{equation}
and thus 
\begin{equation}
\label{afdbajsrtjsj}
\begin{split}
P(d_m\leq 0) = & \sum_j P(d_m\leq 0|d_0 = j)P(d_0 = j)\\
& [\textrm{By (\ref{adfbdfbnf})}]\\
 = & \sum_{\substack{j,l: j+ m-2l \leq 0,\\ 0 \leq j\leq n,\\ 0\leq l\leq m}}  \frac{1}{2^n}\binom{n}{j} \binom{m}{l}  \left(\frac{1}{3}\right)^{l}\left(\frac{2}{3}\right)^{m-l}.
\end{split}
\end{equation}


Next, we turn to the inequalities in (\ref{dvafebrtnrtnw}). 
Inequality (\ref{dfbwbwrttrn}) in Lemma \ref{TransitionFromxTod} yields the first inequality in (\ref{dvafebrtnrtnw}). The second inequality in (\ref{dvafebrtnrtnw}) is a direct application of 
(\ref{adfnrtnrtnjwrtj}) in Lemma (\ref{TransitionFromdToTilded}). By (\ref{adfntnwrtnrt}) we already know the last equality in (\ref{dvafebrtnrtnw}).

\end{proof}


\subsection{Relation to the leading order analysis of the Sch\"oning and  GW process}
Here we connect to the analysis of the asymptotic scaling in the main text, by obtaining the starting points, so to speak, of the leading order analysis of the Sch\"oning process and the GW process.

\subsubsection{Sch\"oning process}
For the Sch\"oning process, the average number of repetitions needed to find a solution is given by 
\begin{equation}
\label{bfgnngf}
N_{\textrm{Sch\"oning}} = \frac{1}{P(x_m = x^{\star})}.
\end{equation}
 By sequences of lower bounds on the ideal success-probability $P(x_m = x^{\star})$, we thus obtain upper bounds on $N_{\textrm{Sch\"oning}}$. The step from the true Sch\"oning process to the walk on $Z$ corresponds to one such inequality, i.e. to 
 \begin{equation}
\label{nmzuzr}
 P(x_m = x^{\star}) \geq    P(d_m\leq 0)
 \end{equation}
  in \eqref{ybfdfnggf} in Proposition \eqref{gnsfgnsgm}.
The leading-order analysis in the main text is based on further such inequalities, with the rationale that the `loss' of probability weight becomes irrelevant for the rates $\gamma = \lim_{n\rightarrow\infty} \frac{1}{n}\log N_{\textrm{Sch\"oning}}$, if the inequalities are chosen to correspond to the leading order contributions.  As a first step along these lines, we restrict to an event where we not only reach the desired solution, but also start the system $x_0$ at Hamming distance  $d_H(x_0,x^{\star}) = j$.
 Trivially, 
\begin{equation}
\label{fgnngsgf}
 P(d_m\leq 0) \geq P(d_m\leq 0, d_0 = j) = P(d_0 = j)P(d_m \leq 0|d_0 = j),
\end{equation}
where can identify $P(d_0 = j)$ with $P(E_1)$ in the main text, i.e. 
\begin{equation}
\label{gfnfsgmgsfm}
P(d_0 = j)  = P(E_1) = \frac{1}{2^n}\binom{n}{\kappa n},\quad j = \kappa n. 
\end{equation}
Next, we wish to connect the remaining factor  in \eqref{fgnngsgf}, i.e.,  $P(d_m\leq 0|d_0 = j)$, to the probability $P(E_2)$, which we recall from the main text corresponds to the event $E_2$, where precisely $\nu m$ steps decrease the Hamming distance, while precisely $(1-\nu)m$ steps increase the Hamming distance. (For the walk on $\mathbb{Z}$ this extends to  $\nu m$ steps in the negative direction, and $(1-\nu)m$ steps in the positive direction.) We conclude that the total decrease is 
\begin{equation}
\label{fsgnsfgn}
d_0-d_m = (2\nu-1)m.
\end{equation}
Let us also recall that the combination of  $E_1$ and $E_2$ is successful, i.e. leads to $d_m\leq 0$, if 
\begin{equation}
(2\nu -1)m\geq \kappa n.
\end{equation}
It is useful to note that $(d_l)_l$ is not only Markovian, but also translation symmetric, which means that the change $d_0-d_m$ is independent of the initial state $d_0$, i.e., the joint distribution of these factorize. (As a side remark, this independence also means that $P(E_1\cap E_2) = P(E_1)P(E_2)$.)
 Hence,
\begin{equation}
\label{jutzmuzm}
\begin{split}
& P(d_{m}\leq 0|d_0 = \kappa n)\\
& = P(d_0-d_{m}\geq \kappa n|d_0 = \kappa n)\\
& [\textrm{Since $d_0-d_m$ is independent of $d_0$}]\\
& = P(d_0-d_{m}\geq \kappa n).
\end{split}
\end{equation}
By comparison of \eqref{jutzmuzm} with \eqref{fsgnsfgn} it follows that 
\begin{equation}
\label{dfgndghmdhg}
\begin{split}
& P(d_{m}\leq 0|d_0 = \kappa n)\\
& = P(d_0-d_{m}\geq \kappa n)\\
&\geq P(E_2),\quad \mathrm{if}\quad (2\nu -1)m\geq \kappa n. 
\end{split}
\end{equation}
Alternatively, we can reach the same conclusion by comparing \eqref{eqn:events_probs} with  \eqref{adfntnwrtnrt} to see  that  
\begin{equation}
\label{sfgnsfmfm}
P(E_2) =  \binom{m}{\nu m}\left(\frac{1}{3}\right)^{\nu m}\left(\frac{2}{3}\right)^{(1-\nu)m}
\leq P(d_m\leq 0|d_0 = \kappa n),\quad  \mathrm{if}\quad (2\nu -1)m\geq \kappa n
\end{equation}
By \eqref{bfgnngf}, \eqref{nmzuzr}, \eqref{fgnngsgf}, \eqref{gfnfsgmgsfm} and \eqref{dfgndghmdhg}, we can conclude that
\begin{equation}
N_{\textrm{Sch\"oning}}\leq \frac{1}{P(E_1)P(E_2)},\quad \mathrm{if}\quad (2\nu -1)m\geq \kappa n.
\end{equation}

\subsubsection{GW process}

For the GW process, let us recall that it consists of a classical outer loop that at each round assigns a definite (classical) initial state, while the walk-process is Groverized. We assume that the number of iterations of the Grover-procedure is tuned to the density of successful walks, for a specific initial Hamming distance $j = \kappa n$, i.e., to the success-probability $P\big(x_m = x^{\star}\big|d_H(x_0|x^{\star}) = \kappa n\big)$. In the analysis we lower bound the success-probability by assuming that process fails whenever $d_H(x_0,x^{\star}) \neq \kappa n$ (which may be pessimistic).  The probability to obtain the initial state $x_0$ with Hamming distance $\kappa n$ is $P\big(d_H(x_0,x^{\star}) =\kappa n\big)$, and thus in average we need to repeat the outer loop $1/P\big(d_H(x_0,x^{\star}) =\kappa n\big)$ times to be guaranteed to at least once reach the initial Hamming distance $\kappa n$. In the successful case, the Grover procedure requires $1/\sqrt{P\big(x_m = x^{\star}\big|d_H(x_0|x^{\star}) = \kappa n\big)}$ iterations. Consequently, an upper bound on the total number of steps is 
\begin{equation}
\begin{split}
N_{\mathrm{GW}} \leq & \frac{1}{P\big(d_H(x_0,x^{\star}) =\kappa n\big)\sqrt{P\big(x_m = x^{\star}\big|d_H(x_0|x^{\star}) = \kappa n\big)}}\\
& [\textrm{By Proposition \eqref{gnsfgnsgm}}]\\
\leq &\frac{1}{P(d_0 = \kappa n)\sqrt{P(d_m\leq 0|d_0 = \kappa n)}}\\
& [\textrm{By \eqref{gfnfsgmgsfm} and \eqref{dfgndghmdhg}
}]\\
\leq & \frac{1}{P(E_1)\sqrt{P(E_2)}},\quad \mathrm{if}\quad (2\nu -1)m\geq \kappa n.
\end{split}
\end{equation}

\end{document}